\tikzset{join/.code=\tikzset{after node path={%
\ifx\tikzchainprevious\pgfutil@empty\else(\tikzchainprevious)%
edge[every join]#1(\tikzchaincurrent)\fi}}}
\tikzset{>=stealth',every on chain/.append style={join},
         every join/.style={->}}
\tikzset{
    >=stealth',
    punkt/.style={
           rectangle,
           rounded corners,
           draw=black, very thick,
           text width=6.5em,
           minimum height=2em,
           text centered},
    pil/.style={
           ->,
           thick,
           shorten <=2pt,
           shorten >=2pt,}
}
\newcommand{\BB}{\mathbb}
\newcommand{\FR}{\mathfrak}
\newcommand{\bea}{\begin{eqnarray}}
\newcommand{\eea}{\end{eqnarray}}
\newcommand{\nn}{\nonumber}
\newcommand{\Tr}{\textrm{Tr}}
\newcommand{\sbullet}{\textrm{\tiny{\textbullet}}}
\newcommand{\udl}{\underline}
\newcommand{\bra}{\langle}
\newcommand{\ket}{\rangle}
\newcommand{\im}{\textrm{Im}\,}
\newcommand{\reeb}{\textrm{\scriptsize{$R$}}}
\newcommand{\sreeb}{\textrm{\tiny{$R$}}}
\newcommand{\gYM}{g_{\textrm{\tiny{$YM$}}}}
\def\gb{\beta}
\def\Gc{\Gamma}
\def\Gd{\Delta}
\def\gd{\delta}
\def\ep{\epsilon}
\def\gt{\theta}
\def\gs{\sigma}
\def\gk{\kappa}
\def\gl{\lambda}
\def\Gl{\Lambda}
\def\Go{\Omega}
\def\go{\omega}
\DeclareMathAlphabet{\mathpzc}{OT1}{pzc}{m}{it}
\newtheorem{theorem}{Theorem}[section]
\newtheorem{lemma}[theorem]{Lemma}
\newenvironment{proof}[1][Proof]{\begin{trivlist}
\item[\hskip \labelsep {\bfseries #1}]}{\end{trivlist}}
\newcommand{\qed}{\nobreak \ifvmode \relax \else
      \ifdim\lastskip<1.5em \hskip-\lastskip
      \hskip1.5em plus0em minus0.5em \fi \nobreak
      \vrule height0.5em width0.5em depth0.00em\fi}
\begin{document}
\renewcommand{\theequation}{\thesection.\arabic{equation}}
\setcounter{page}{0}

\thispagestyle{empty}
\begin{flushright} \small
UUITP-02/14\\
NS-KITP-14-018
 \end{flushright}
\smallskip
\begin{center} \LARGE
{\bf Gluing Nekrasov   partition functions}
 \\[12mm] \normalsize
{\bf  Jian Qiu$^a$, Luigi Tizzano$^b$, Jacob Winding$^b$ and Maxim Zabzine$^b$} \\[8mm]
 {\small\it
${}^a$Math\'ematiques, Universit\'e du Luxembourg,\\
 Campus Kirchberg, G 106,  L-1359 Luxembourg\\
      \vspace{.5cm}
${}^b$Department of Physics and Astronomy,
     Uppsala university,\\
     Box 516,
     SE-75120 Uppsala,
     Sweden\\
   }
\end{center}
\vspace{7mm}
\begin{abstract}
 \noindent In this paper we summarise the localisation calculation of 5D super Yang-Mills on simply connected toric Sasaki-Einstein (SE) manifolds. We show how various aspects of the computation, including the equivariant index, the asymptotic behaviour and the factorisation property are governed by the combinatorial data of the toric geometry.
  We prove that the full perturbative partition function on a simply connected SE manifold corresponding to an ${\tt n}$-gon toric diagram factorises to ${\tt n}$ copies of perturbative Nekrasov partition function. This leads us to conjecture the full partition function as gluing ${\tt n}$ copies of full Nekrasov partition function.
   This work is a generalisation of some earlier computation carried out on $Y^{p,q}$ manifolds, whose moment map cone has a quadrangle and
     our result  is valid for manifolds whose moment map cones have pentagon base, hexagon base, etc. The algorithm we used
      for dealing with general cones may also be of independent
   interest.
    \end{abstract}

\eject
\normalsize
\tableofcontents

\section{Introduction}\label{sec_intro}

Starting from Pestun's work \cite{Pestun:2007rz} there has been an explosion in the applications of localisation technique for supersymmetric gauge theories
 in diverse dimensions.  The calculations were mainly concerned with the evaluation of partition functions and the expectation values of the supersymmetric Wilson loops on (squashed) $S^d$
  and on $S^d \times S^1$, while other geometries were not investigated in detail. However in order to understand the geometrical properties of partition functions,  it
   is important to perform calculations on more general geometries. Five dimensional supersymmetric gauge theories on SE manifolds offer us this possibility and this is the subject of this paper.

 In order to be able to localise 5D supersymmetric Yang-Mills theory we need at least two supersymmetries. Indeed we can construct the supersymmetric
  gauge theory on any simply connected Sasaki-Einstein manifold and the theory preserves two supersymmetries.  In particular there exist very nice examples
   of such manifolds, toric Sasaki-Einstein manifolds (their cones are toric Calabi-Yau manifolds).
 The goal of this work is to present the uniform treatment of localisation calculation for perturbative partition function of 5D supersymmetric Yang-Mills
  on any simply connected toric Sasaki-Einstein manifolds (for the earlier related work in 4D see \cite{Nekrasov:2003vi}). Every such manifold is described in terms of an ${\tt n}$-gon toric diagram
  and topologically corresponds to $({\tt n}-3)(S^2 \times S^3)$which is  $({\tt n}-3)$-fold connected sums of $S^2 \times S^3$ (see proposition 11.4.3 in \cite{BoyerGalicki} or corollary
  5.4 in \cite{2010arXiv1004.2461S}) and they are known as the
Smale manifolds.  This work is a natural continuation and generalisation of the previous calculations for $Y^{p,q}$-spaces \cite{Qiu:2013pta, Qiu:2013aga}.

Let us summarise our main results. Let $X$ be a simply connected toric SE manifold (we will give brief review in section \ref{sec_TSEM} of some features of such manifolds), with moment map cone $C_{\mu}(X)$ defined by
\bea
C_{\mu}(X)=\{\vec r\in\BB{R}^3 | \vec r\cdotp \vec v_i\geq0,~i=1,\cdots\tt n\}~,\nn
\eea
where $\vec v_i$ are the inward pointing normals of the $\tt n$ faces of this cone, for example see Figure \ref{fig_intro}.  The SE condition also implies that there exists a primitive vector $\vec \xi$, such that
\bea
\vec\xi\cdot \vec v_i=1~,~~~\forall i~,\label{gorenstein}
\eea
known as the 1-Gorenstein condition. Up to an $SL(3, \BB{Z})$ rotation, we can make $\vec \xi=[1,0,0]$, we will use this convention throughout the paper.
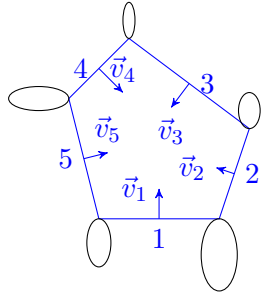
\begin{figure}[h]
\begin{center}
\begin{tikzpicture}[scale=.8]
\draw [-,blue] (-1,-1) -- node[below] {\small$1$} (1,-1) -- node[right] {\small$2$} (1.5,.5) -- node[right] {\small$3$} (-.5,2) -- node[left] {\small$4$} (-1.5,1) -- node[left] {\small$5$} (-1,-1);

\draw (1,-1.6) ellipse (.3 and .6);
\draw (-1,-1.4) ellipse (.2 and .4);
\draw (1.5,0.8) ellipse (.18 and .3);
\draw (-0.5,2.3) ellipse (.1 and .3);
\draw (-2,1) ellipse (.5 and .2);

\draw [->,blue] (0,-1) -- (0,-.5) node[left] {\small$\vec v_1$};
\draw [->,blue] (1.25,-0.25) -- (0.95,-.15) node[left] {\small$\vec v_2$};
\draw [->,blue] (0.5,1.25) -- (0.2,.85) node[below] {\small$\vec v_3$};
\draw [->,blue] (-1,1.5) -- (-0.6,1.1) node[above] {\small$\vec v_4$};
\draw [->,blue] (-1.25,0) -- (-0.85,0.1) node[above] {\small$\vec v_5$};

\end{tikzpicture}

\caption{The polygon base of a polytope cone. Over the interior of the polygon there is a $T^3$ fibre, but over the faces the $T^3$ degenerates into $T^2$, which further degenerate over the vertices to $S^1$, drawn as the circles in the figure. These circles are the only generic closed Reeb orbits.}\label{fig_intro}
\end{center}
\end{figure}

Next let $\vec\reeb$ be a three vector parameterising the Reeb vector field, satisfying the dual cone condition (see the equation (\ref{dual_condition})).
 The perturbative partition function of 5D SYM with a hypermultiplet of mass $m$ and representation $\underline{R}$ on $X$ is given by the matrix model integral
\bea
Z^{pert}
=\int\limits_{\FR{t}}da~e^{-\frac{8\pi^3 r}{\gYM^2}\varrho\,\Tr[a^2]}\cdotp
\frac{{\det}_{adj}' ~  S_3^X (ia; \vec\reeb)} {{\det}_{\underline{R}} ~S_3^X(ia +im +\reeb^1/2; \vec\reeb)}~,\label{Z_pert_sum_intro}
\eea
where we define the generalised triple sine associated to $X$
\bea
 S_3^X (x; \vec\reeb) = \prod\limits_{\vec m\in C_{\mu}(X)\cap\BB{Z}^3}\big(\vec m\cdotp\vec \reeb+x\big)\big(\vec m\cdotp\vec \reeb+\vec\xi\cdotp\vec\reeb-x\big)~,\label{S_3_new}
\eea
where $\vec\xi$ is defined in (\ref{gorenstein}), if we take $\vec\xi=[1,0,0]$ as above, then $\vec\xi\cdotp\vec\reeb$ is simply $\reeb^1$, the first component of $\vec\reeb$. The product is taken over integer points inside the cone
 $C_{\mu}(X)$.  Once we have computed the answer (\ref{Z_pert_sum_intro}), we may  allow $\vec\reeb$ to have complex
 components.
Keeping the real part of $\vec \reeb$ within the dual cone, but giving it a generic imaginary part, then we can factorise the above partition function into
\bea
Z^{pert}
=\int\limits_{\FR{t}}da~e^{-\frac{8\pi^3 r}{\gYM^2}\varrho\,\Tr[a^2]+B_{vec}(ia)+B_{hyp}(ia)}\cdotp
\frac{\prod\limits_{i=1}^{\tt n} {\det}_{adj}'\big(e^{-a\gb_i}\big|e^{i\gb_i\ep_i},e^{i\gb_i\ep_i'}\big)_{\infty}} {\prod\limits_{i=1}^{\tt n}{\det}_{\underline{R}}\big(e^{-(a+m^*)\gb_i}\big|e^{i\gb_i\ep_i},e^{i\gb_i\ep_i'}\big)_{\infty}}~,\label{Z_pert_factor_intro}
\eea
where $m^*=m-\frac{i\reeb^1}{2}$.

We now explain the notations. The index $i$ labels the $\tt n$ closed Reeb orbits in $X$. Each such orbit has circumference $\gb_i$ and the special function $\big(e^{-a\gb_i}\big|e^{i\gb_i\ep_i},e^{i\gb_i\ep_i'}\big)_{\infty}$ defined in Appendix \ref{a-special} by (\ref{special_Narukawa}), is the perturbative part of the Nekrasov partition function on $\BB{C}^2\times S^1$ with equivariant parameters $\ep_i$ and $\ep_i'$.
   The Nekrasov partition function \cite{Nekrasov:2002qd, Nekrasov:2003rj} is defined as counting of states on $\BB{C}^2\times S^1$
\bea
  Z^{full}_{\mathbb{R}^4 \times S^1} = {\rm Tr}_{\cal H} \big( (-1)^{2(j_L + j_R)}e^{- \beta H - i(\epsilon - \epsilon') J_L^3 -i (\epsilon + \epsilon') J_R^3    -i (\epsilon +
   \epsilon') J_I^3}\big)~,\nn\eea
where $H$ is the Hamiltonian, $j_L$ and $j_R$ correspond to the spins under the little group $SO(4)$ and $J_I^3$ is a generator of the R-symmetry group $SU(2)$.
The quantity $m^*$ is the effective mass $m^*=m-i\reeb^1/2$, and $\vec\reeb^1$ here comes from the combination $\vec\reeb\cdotp\vec\xi$ and the choice $\vec\xi=[1,0,0]$. Finally the quantities $\gb,\ep,\ep'$ are defined as follows. Let $i$ label the corner of the intersection of faces $i$ and $i+1$ in Figure \ref{fig_intro}, and choose $\vec n$ such that $\det[\vec v_i,\vec v_{i+1},\vec n]=1$, then
\bea
\frac{\gb_i}{2\pi}=\det[\vec v_i,\vec v_{i+1},\vec \reeb]^{-1}~,~~
\ep_i=\det [\vec\reeb,\vec v_{i+1},\vec n]~,~~\ep_i'=\det[\vec v_i,\vec \reeb,\vec n]~.\label{recipe_bee}
\eea
It is important to stress that the identification of parameters $\ep, \ep'$ is not unique, one may always add to $\ep,\;\ep'$ integer multiples of $2\pi\gb^{-1}$.
The terms $B_{vec}(x)$ and $B_{hyp}(x)$ are polynomials defined in Appendix \ref{a-special}   by (\ref{Bvec}) and  (\ref{Bhyp}).

 The above manner of presenting the factorisation for $Y^{p,q}$
  was used in \cite{Qiu:2013aga}, but it has one drawback, namely the piece we call the perturbative Nekrasov partition function, in particular, the denominator of (\ref{Z_pert_factor_intro}), is not manifestly symmetric under exchange $\udl{R}\to\bar{\udl{R}}$, namely under $a+m\to-a-m$, and it only becomes so when combined with the piece $B_{hyp}$. However this symmetry is expected since the denominator of (\ref{Z_pert_sum_intro}) does possess this symmetry. The reason that the Nekrasov partition function lacks this symmetry is that in the trace, one must let $\ep$ $\ep'$ be complex in order to define the index as a formal power series. In doing so the matter fields of representations $\udl{R}$ and $\udl{\bar R}$ are treated unequally leading to the lack of symmetry. However, we can follow the work \cite{Nieri:2013vba} and factorise also the Bernoulli pieces $B_{vec}$, $B_{hyp}$ and make the symmetry manifest. So a second way of presenting the factorisation is
\bea Z^{pert}=\int\limits_{\FR{t}}
\resizebox{.8\hsize}{!}{$
da~e^{-\frac{8\pi^3 r}{\gYM^2}\varrho\,\Tr_f[a^2]}\cdotp\frac{\prod\limits_{i=1}^{\tt n}\big({\det}_{adj}'\big(e^{-\gb_ia}\big|e^{i\gb_i\ep_i},e^{i\gb_i\ep_i'}\big)_{\infty}\big(a\to -i\reeb^1-a\big)\big)^{1/2}}{\prod\limits_{i=1}^{\tt n}\big(\det_{\underline{R}}\big(e^{-\gb_i(a+m-i\sreeb^1/2)}\big|e^{i\gb_i\ep_i},e^{i\gb_i\ep_i'}\big)_{\infty}\big(a+m\to -a-m\big)\big)^{1/2}}~.\label{Z_pert_factor_I_intro}
$}
\eea
In this way, the partition function is presented as the product of $\tt n$ blocks, each of which corresponds to a copy of partition function associated to $\BB{C}^2\times S^1$, for further investigations of the properties of these blocks see \cite{Nieri:2013yra, Nieri:2013vba}. At this point it is natural to conjecture that the full partition function on $X$ is
 given by same gluing of  $\tt n$-copies of the full Nekrasov partition functions.

The paper is organised as follows: In section \ref{sec_TSEM} we give an overview of the 5D toric SE manifolds, with emphasis on how to read off the geometry from the toric data.
 In section \ref{s-localisation} we present the derivation of full perturbative partition function for any toric simply connected SE manifolds.
  We explain that the answer can be written in two equivalent ways, either using the restricted lattice or using the cone description.
 The result is given in terms of some new special function which is a generalisation of triple sine function.
   Section \ref{sec_act_cal} contains the detailed technical proof of the factorisation of the perturbative partition function in terms of Nekrasov's partition functions
    on $\mathbb{R}^4 \times S^1$. In section \ref{s-summary} we summarise our paper and we conjecture the full nonperturbative answer which contains
     instantons. We also point out some puzzles and open problems in that section.
 The paper is supplemented by two appendices. In appendix \ref{a-special} we collect some basic facts and conventions of the special functions.
 We also prove a property of a special function that we used in the main text. In appendix  \ref{sec_childsplay} we make some comments on the description  of the good cone condition.

\section{Toric Sasaki-Einstein Manifolds}\label{sec_TSEM}

In this section we briefly review some background material concerning the 5D toric Sasaki-Einstein geometry. In particular we concentrate on
 how one may read off from the toric diagram information about the geometry. The reader  may  find similar review in  \cite{Qiu:2013aga} and
  for more detailed exposition one may consult  \cite{2010arXiv1004.2461S,BoyerGalicki}.

Take a manifold $X$ and consider its metric cone $C(X)=X\times\BB{R}^{\gneq0}$ with metric $G=d{\mathfrak{r}}^2+\mathfrak{r}^2g_X$, with $\mathfrak{r}$ being the coordinate of $\BB{R}^{\gneq0}$. If $C_M(X)$ is K\"ahler, then one says that $X$ is a Sasaki manifold, if further $C_M(X)$ is Calabi-Yau, then $X$ is said to be \emph{Sasaki-Einstein} (SE). In particular, its Ricci tensor satisfies
\bea R_{mn}=4g_{mn}\nn\eea
for dimension 5.

Given a Sasaki manifold, one has the metric contact structure, with the Reeb vector field $\vec\reeb$ and contact 1-form $\gk$ given by
\bea
 \reeb=J(\mathfrak{r}\partial_{\mathfrak{r}})~,~~~\gk=i(\bar\partial-\partial)\log \mathfrak{r}~,\nn
 \eea
where $J$ is the complex structure over $C(X)$. If there is
an effective, holomorphic and Hamiltonian action of the torus $T^3$ on the
metric cone $C(X)$, and the Reeb vector field is a linear combination of the torus action, then one says that $X$ is \emph{toric}. Our main examples $S^5$, $Y^{p,q}$-spaces discovered in \cite{Gauntlett:2004yd} and $L^{a,b,c}$-spaces discovered in 
 \cite{Cvetic:2005ft} are all toric SE manifolds. Next we turn to the toric description of these examples and more general toric SE manifolds.

Let $\vec\mu$ be the moment map for the three torus actions, then due to the cone structure on $C(X)$, the image of $\vec\mu$ will also be a cone in $\BB{R}^3$, denoted as $C_{\mu}(X)$. From the cone one can read off almost all information of the manifold, in fact, it is was shown in \cite{2001math......7201L} by Lerman, extending the well-known Delzant construction \cite{Delzant1988}, that from a given \emph{good} cone (definition to come shortly), one can reconstruct the manifold itself. One will see an inkling of how this is done in subsection \ref{sec_long_title}.

Lerman termed a cone to be \emph{good}\footnote{The original formulation is slightly different from the one given here, and since the equivalence does not seem obvious to us, we provide a short proof in the appendix \ref{sec_childsplay}.} if at each intersection of its two adjacent faces $F_i$ and $F_{i+1}$, their inward pointing normals $\vec v_i,\vec v_{i+1}\in\BB{Z}^3$ can be completed into a basis of $\BB{Z}^3$. That is, there exists a third vector $\vec n$ such that $\det[\vec v_i,\vec v_{i+1},\vec n]=1$. A useful
 way of viewing the manifold $C(X)$ is the following: away from the boundary of the moment map cone $C_{\mu}(X)$, one has the torus fibration $T^3\to C(X)\big|_{C_{\mu}(X)^{\circ}}\to C_{\mu}(X)^{\circ}$, where ${}^{\circ}$ means the interior. While at face $i$, the particular torus as singled out by $\vec v_i$ degenerates.

The Reeb vector field is by definition a linear combination of the three torus actions, so one can represent $\reeb$ as a 3-vector $\vec\reeb$. The actual manifold $X$ can be obtained by restricting $C(X)$ to the plane $\vec y\cdotp \vec \reeb=1/2$, and we shall call the intersection
\bea \{\vec y\in C_{\mu}(X)|\vec y\cdotp \vec\reeb=1/2\}=B_{\mu}(X)~,\nn\eea
where $B$ stands for 'base'. This base is a compact polygon iff the 3-vector $\vec\reeb$ is within the \emph{dual cone}
\bea
\vec\reeb=\sum_{i=1}^{\tt n}~\gl_i\vec v_i~,~~\gl_i>0~,~~\forall i~.\label{dual_condition}
\eea
This condition also appears later as the condition for the partition function to converge. From this discussion, one may similarly view $X$ as a torus fibration over $B_{\mu}(X)^{\circ}$ and again at the boundary of $B_{\mu}(X)$, different tori degenerates. An immediate consequence of this view is that the fundamental group of $X$ can be computed as
\bea\pi_1(X)\sim \BB{Z}^3\big/\textrm{span}_{\BB{Z}}\bra \vec v_1,\cdots,\vec v_{\tt n}\ket~.\label{fundamental_gruop}\eea
The meaning of this formula is clear: only those tori that cannot be written as a linear combination of $\vec v_i$ are not contractible.  As a technical remark, if $X$ is simply connected, then it implies that the matrix $[\vec v_1,\cdots,\vec v_{\tt n}]$ can be completed into an $SL({\tt n},Z)$ matrix. Indeed, up to right multiplying by an $SL({\tt n},\BB{Z})$ matrix, one can put $\vec v_{1,2,3}$ into $[1,0,0]$, $[0,1,0]$ and $[0,0,1]$, and the rest is clear.

Furthermore, if $\vec\reeb$ is generic, then the orbit of the Reeb vector field is not closed, except at the corners of $B_{\mu}(X)$, where only one $S^1$ is acting non-trivially. When restricted to a neighbourhood of a corner, the manifold $X$ is a solid torus, i.e. diffeomorphic to $S^1\times\BB{C}^2$, where $S^1$ is the closed Reeb orbit over the corner point, for example see Figure \ref{fig_intro}. But the solid torus is twisted, as one completes a cycle along $S^1$, the two planes also rotate by some angles. The central message of this paper is that to compute the partition function, one need only include one copy of the Nekrasov instanton partition function for each closed Reeb orbit, where the twisting parameters appear as the equivariant parameters of the Nekrasov partition function.

Let us focus now on the neighbourhood of one of the corners of, say, the intersection of face $i$ and face $i+1$, let $\vec n$ be an integer-entry 3-vector such that $\det[\vec n,\vec v_i,\vec v_{i+1}]=1$ (the existence of $\vec n$ is a consequence of the moment map cone being good).
One can then decompose the Reeb vector as a linear combination of $\vec n,~\vec v_i,~\vec v_{i+1}$, that is one decompose the Reeb into one $U(1)$ that remains non-degenerate at the corner, which gives the closed Reeb orbit there, plus two more that degenerate at the same corner, giving the twisting parameter of the solid torus. This reasoning leads to the formulae (\ref{recipe_bee}) for the circumference and twisting parameters.

The Calabi-Yau condition can also be phrased in terms of the data of the cone. Assuming that the number of faces is larger than 3, then it turns out that if there exists an integer vector $\vec\xi$ such that $\vec\xi\cdotp\vec v_i=1,~\forall i$, then $C(X)$ is Calabi-Yau. In fact, it is convenient to choose a basis of the 3-tori so that the first component of $\vec v_i$ is 1 for all $i$ and then $\vec\xi=[1,0,0]$. This property plays a pivotal role in our calculation, in that it allows us to perform a summation within the cone $C_{\mu}(X)$.

Next we give some examples, first the $Y^{p,q}$ space treated in \cite{Qiu:2013aga}, one chooses the four normals to be
\bea \vec v_1=[1,0,0]~,~~\vec v_2=[1,-1,0]~,~~\vec v_3=[1,-2,-p+q]~,~~\vec v_4=[1,-1,-p]~,\label{four_normal_Ypq}\eea
where $p>q>1$ and $\gcd(p,q)=1$.

A generalisation to the $Y^{p,q}$ space is the $L^{a,b,c}$ space, with $d=a+b-c>0$ and $\gcd(a,c)=\gcd(a,d)=\gcd(b,c)=\gcd(b,d)=1$. The four normals are
\bea \vec v_1=[1,c,-bn]~,~~\vec v_2=[1,a,bm],~~\vec v_3=[1,0,1]~,~~\vec v_4=[1,0,0]~,\label{in_normal_L_abc_II}\eea
where $m,n$ are chosen so that $mc+na=1$. The metric cone $C(L^{a,b,c})$ can be constructed as
 K\"ahler quotient of $\mathbb{C}^4$ with $U(1)$ with the charges $(a,b, -c , -a-b+c)$.

As an example of a pentagon toric cone, one has the so called $X^{p,q}$, $p>q>0$ space, whose normals are
\bea
\vec v_1=[1,0,0]~,~~\vec v_2=[1,1,0]~,~~\vec v_3=[1,0,p]~,~~\vec v_4=[1,-1,p+q]~,~~\vec v_5=[1,-1,p+q-1]~.\nn\eea
The metric cone of this space can be constructed from the K\"ahler quotient of $\BB{C}^5$ with respect to
  two $U(1)$'s of charge $[1,0,-1,p,-p]$ and $[1,-1,1,q-1,-q]$.

For the general case we have the following alternative description of $C(X)$ as K\"ahler quotient
\bea
C(X)= \mathbb{C}^{{\tt n}}// U(1)^{\tt n-3}~,\label{Kah-quotient}
\eea
where every $U(1)$ acts on  $\mathbb{C}^{{\tt n}}$ with the charges
 $\vec{Q}_a= (Q_a^1, ..., Q^i_a, ... , Q_a^{\tt n})$ and to ensure CY condition we require $\sum\limits_{i=1}^{\tt n} Q_a^i=0$. If $X$ is simply connected SE manifold then
  one can pick vectors $\vec u_1, \vec u_2, \vec u_3\in\BB{Z}^{\tt n}$ such that $A=[\vec u_1, \vec u_2 ,\vec u_3 ,\vec Q_1,\cdots,\vec Q_{\tt n-3}]$ forms an $SL({\tt n},\BB{Z})$ matrix.  The vectors  $v_i^a$ are defined as the first $3$ rows of $A^{-1}$. We denote these vectors by $\vec v_1,\cdots,\vec v_{\tt n}$, i.e. $\vec v_i$ are $3$-vectors, and the conditions $\sum\limits_{a=1}^{3} v_i^am_a\geq0$ describe a cone inside $\BB{R}^{3}$. This cone is none other than the moment map cone of the SE manifold, and the $\vec v_i$'s are the inward pointing normals (but not necessarily in the correct order).

In what follows we concentrate only on simply connected SE toric manifolds which topologically correspond to $({\tt n}-3)(S^2 \times S^3)$,
namely $({\tt n}-3)$ connected sum of $S^2 \times S^3$.
 We will make a few comments
 about non-simply connected SE manifolds in the last section \ref{s-summary}.


\section{Localisation of 5D SYM}\label{s-localisation}

In this section we sketch briefly the actual localisation calculation. Our presentation is the generalisation of
 the previous works \cite{Kallen:2012cs, Kallen:2012va, Qiu:2013pta} to the case of general simply connected toric SE manifolds.  We also discuss two different representations of the answer.

\subsection{Localisation calculation}

In \cite{HosomichiSeongTerashima} the SYM theory coupled to matter on the round $S^5$ was written down. Due to the SE structure over $S^5$, one can find a pair of normalised Killing spinors $\xi_{1,2}$, such that the bilinear $\xi_1\Gc^m\xi_2$ is proportional to the Reeb vector field $\reeb^m$ on $S^5$. The two Killing spinors will pick out a particular susy charge called $\gd$ that satisfies the key relation
\bea \gd^2=-iL_{\sreeb}+G~,~~~\textrm{for the vector multiplet} \label{susy_sq_vec}\\
 \gd^2=-iL^s_{\sreeb}+G~,~~~\textrm{for the hypermultiplet} \label{susy_sq_hyp}\eea
where $G$ stands for gauge transformation and $L_{\sreeb}$ ($L^s_{\sreeb}$) is the (spinor) Lie derivative.

It turns out that a change of variables (which again involves the Killing spinors) allows us to formulate the vector multiplet in terms of differential forms, and the only feature that is required from the geometry is the metric contact structure. This was called the twisted SYM in \cite{Kallen:2012cs}, and the susy complex was called the \emph{cohomological complex}.
Using the algebra (\ref{susy_sq_vec}), the path integral localises onto the so called contact instanton configurations, and one needs to integrate over the Gaussian fluctuations around such configurations. To calculate the full partition function from first principles appears to be hard at the moment. However
 the expansion around zero connection configuration is doable and one obtains the perturbative partition function as a matrix model.
  Furthermore, since the actual SE metric is not required once we pass to the cohomological complex formulation, we can consider the partition function for the deformed Reeb vector field, i.e. the squashed five sphere. Equivalently, one can turn on extra background gauge fields and put the original SYM theory directly on a squashed $S^5$ and perform the computation from there, see \cite{Imamura:2012xg, Imamura:2012bm} (see also \cite{Kim:2012ava, Lockhart:2012vp, Kim:2012qf}), but the work load is considerably heavier this way around.
For the hypermultiplet, one would need in principle the SE metric, however, once the result is obtained, it is obvious how to generalise it to the squashed sphere.

Much of the story can be repeated for an infinite class of simply connected SE manifolds $Y^{p,q}$. The simply connectedness is there to ensure that the zero instanton configuration actually corresponds to the trivial connection. The calculation was completed in \cite{Qiu:2013pta} for the $Y^{p,q}$ manifolds. The main technical aspect of the calculation, the computation of an equivariant index, relies on using the known index structure on $S^3\times S^3$, and imposing a lattice constraint, as we shall review shortly.
This calculation carries over to the $L^{a,b,c}$ as a straightforward generalisation. But for toric SE manifolds with a more complicated moment map cone, the method used there gets cumbersome, and it is more systematic to employ the fixed point theorem \cite{Ellip_Ope_Cpct_Grp}, presented in the appendix of \cite{Qiu:2013pta}.

Now our goal is to generalise the result from \cite{Qiu:2013pta} to any simply connected toric SE manifold. Following the logic presented in  \cite{Kallen:2012cs, Kallen:2012va, Qiu:2013pta}  for any simply connected SE manifold the perturbative partition function of $N=1$ SYM with a hypermultiplet in representation $\underline{R}$ and mass $m$ is written as the superdeterminant of the two operators in (\ref{susy_sq_vec}) and (\ref{susy_sq_hyp}), taken over the
 $\Go^{0,\sbullet}_H$-complex\footnote{In writing this expression we skipped a few technical steps, in particular, the integral of $a$ within the Lie-algebra of the gauge group is written as the integral over Cartan times a determinant factor, which combines with the contribution from the ghost sector to give this neat expression (\ref{matrix-superdet}).}
\bea
 Z^{pert}
=\int\limits_{\FR{t}}da~e^{-\frac{8\pi^3 r}{\gYM^2}\varrho\,\Tr[a^2] }\cdotp
\frac{{\det}_{adj}'\,\textrm{sdet}_{\Go^{0,\sbullet}_H}(-irL_{\sreeb}-ia)}{{\det}_{\underline{R}}\,\textrm{sdet}_{\Go^{0,\sbullet}_H}(-irL^s_{\sreeb}-ia-im)}~,\label{matrix-superdet}
\eea
where $r$  is a parameter controlling the overall size of $X$, $\varrho$ is the squashed volume of $X$ normalised against $\textrm{Vol}_{S^5}=\pi^3$. The actual non-trivial
 calculation  is centred around the explicit evaluation of superdeterminants in (\ref{matrix-superdet}).

 There exists  different methods to evaluate the superdeterminants in (\ref{matrix-superdet}).
In this section we shall use the method due to Schmude \cite{Schmude:2014lfa}, see also
 \cite{Schmude:2013dua,Eager:2012hx}.
Sasaki manifolds have a transverse K\"ahler structure, that is, one can write the 5D metric as
\bea
g=\gk\otimes \gk+g_H\nn
\eea
with $g_H$ being a \emph{local} K\"ahler metric, see subsection 1.2 of \cite{2010arXiv1004.2461S}. Thus one has the complex of horizontal $(0,i)$ forms, with $i=0,1,2$. By projecting the de Rham differential to its component that increase the degree $(0,i)\to (0,i+1)$, we define a transverse Dolbeault differential $\bar\partial_H$, whose cohomology is called the Kohn-Rossi (KR) cohomology.
The various fields in the vector-, hypermultiplet can be reduced to the horizontal $(0,i)$ forms using Fierz identity, and fit nicely into the $\bar\partial_H$ complex, see either \cite{Kallen:2012cs} or \cite{Kallen:2012va} for details.

As is standard for localisation, only those modes that are in the KR cohomology (which we denote simply as $H^{0,\sbullet}$) make a net contribution to the superdeterminant, so the final answer is
\bea
Z^{pert}
=\int\limits_{\FR{t}}da~e^{-\frac{8\pi^3 r}{\gYM^2}\varrho\,\Tr[a^2] }\cdotp
\frac{{\det}_{adj}'\,\textrm{sdet}_{H^{0,\sbullet}}(-irL_{\sreeb}-ia)}{{\det}_{\underline{R}}\,\textrm{sdet}_{H^{0,\sbullet}}(-irL^s_{\sreeb}-ia-im)}~.\label{Z_pert_review}
\eea
In writing $Z^{pert}$, we ignore some possible ($a$ independent) phases coming from the determinant factors.

It was pointed out by Schmude that the KR cohomology can be reduced to $H^0({\cal O}(C(X)))$, with ${\cal O}(C(X))$ being the sheaf of holomorphic functions on the metric cone of $X$. We will go over this argument here.
Since the Reeb is Killing with respect to the metric, the operator $L_{\sreeb}$ will commute with $\bar\partial_H$, and we can analyse the cohomology of $\bar\partial_H$ with definitive $-iL_{\sreeb}$ eigenvalue, say, $\zeta$ (this eigenvalue is the R-charge). Now one can find a map relating the horizontal $(0,i)$ forms on $X$ to those on the metric cone $C(X)$. Assuming $X$ is embedded in $C(X)$ at $\mathfrak{r}=1$, the Dolbeault differential $\bar\partial$ on $C(X)$ is related to $\bar\partial_H$ in local coordinates by
\bea \bar\partial=\bar\partial_H+\frac12(d\log \mathfrak{r}-i\gk)(\mathfrak{r}\partial_{\mathfrak{r}}
 +i\partial_{\gt})~,\label{bardel_bardel_H}\eea
where $\gt$ is the local coordinate such that $\partial_{\gt}$ is the Reeb vector. Assuming that $\go\in \Go_H^{0,i}$ has eigenvalue $\zeta$ under $-iL_{\sreeb}$, then we can extend it to a form on the $C(X)$ as
 \bea
ext:~~\go\to \go \mathfrak{r}^{\zeta}~,\nn
\eea
the extension makes sense since the point $\mathfrak{r}=0$ is removed. Furthermore if $\go$ is closed (exact) under $\bar\partial_H$ then $\go \mathfrak{r}^{\zeta}$ is closed (exact) under $\bar\partial$, thus the extension induces a map of the corresponding cohomology.
Conversely a $(0,i)$-form on $C(X)$ can be restricted to $X$, the map $ext$ composed with the restriction gives the identity map $res\circ ext=1$. So we see that the induced map on 
cohomology induced by $ext$ must be injective. This implies immediately that $H^{0,1}(X)$ is zero since $H^1({\cal O}(C(X)))=0$. For zeroth cohomology $H^{0,0}(X)$, since there are
 no exact forms, and if a function $f$ is holomorphic on $C(X)$, its restriction to $\FR{r}=1$ is non-zero, as can be seen from (\ref{bardel_bardel_H}) (for example, one can expand $f$ into Laurent series of $\FR{r}$ and modes of different power in $\FR{r}$ must have different $\gt$ eigenvalue and hence cannot cancel out at $\FR{r}=1$). 
So we actually get a bijection
\bea ext:~H^{0,0}(X)\simeq H^0({\cal O}(C(X)))~,\nn\eea
For the $(0,2)$ forms, one can use the holomorphic volume form $\Go$ on $C(X)$ to construct a pairing between $(0,0)$ and $(0,2)$ forms. Since $\Go$ is a top holmorphic form, it is closed, and its restriction to $X$ (also denoted as $\Go$) is closed as well. The restriction of $\Go$ has the property that $\iota_{\sreeb}\Go=1$ and its horizontal component is in $\Go_H^{2,0}(X)$. From these properties, we see that the integration
\bea
\bra f,\go\ket=\int\limits_X\Go f\go~,~~f\in\Go^{0,0}(X)~,~\go\in\Go_H^{0,2}(X)\nn
\eea
is a non-degenerate paring. It is also a non-degenerate pairing between $H^{0,0}(X)$ and $H^{0,2}(X)$, to see this, let $f\in H^{0,0}(X)$, and $\go=\bar\partial_H\zeta,~\zeta\in \Go_H^{0,1}$, then
\bea
\bra f,\bar\partial_H\zeta\ket=\int\limits_X\Go f\bar\partial_H\zeta=\int\limits_X\Go fd\zeta=\int\limits_X\Go df\zeta=\int\limits_X\Go \bar\partial_Hf\,\zeta=0~.\nn\eea
From these considerations $H^{0,2}(X)\simeq (H^{0,0}(X))^*$.

To summarise, to obtain the KR cohomology for our specific problem, it suffices to compute $H^0({\cal O}(C(X)))$, i.e. the holomorphic functions on $C(X)$. But the latter object has a combinatorial description, one simply enumerates the integral points within the moment map cone (this follows almost directly from the definition of a toric K\"ahler manifold)
and each such point gives a holomorphic function on $C(X)$.
What is more, the three coordinates of these points give the charges of these functions under the three $U(1)$'s. In particular, one can also read off their $L_{\sreeb}$ eigenvalue. To figure out the $U(1)$ charges of $H^{0,2}$ groups, one needs to get the charges of $\Go$. To do this, let $\vec\xi$ be the 3-vector such that $\vec\xi\cdotp \vec v_i=1$ (see (\ref{gorenstein})), then the charges of $\Go$ are the 3-components of $\vec \xi$, which is also a standard fact of toric geometry. Then in particular, the R-charge of $\Go$ (the $-iL_{\sreeb}$ eigenvalue) is $\vec\xi\cdotp\vec\reeb$. In all of the examples given earlier, this vector $\vec\xi$ is chosen to be $[1,0,0]$, and so the R-charge is $\reeb^1$.

With these preparation, one can write the superdeterminant in (\ref{Z_pert_review}) as
\bea
\textrm{sdet}_{H^{0,\sbullet}}(-irL_{\sreeb}+x)=\prod_{\vec n\in C_{\mu}(X)\cap\BB{Z}^3}\big(\vec n\cdotp\vec\reeb+x\big)\big(\vec n\cdotp\vec\reeb-x+\vec\xi\cdotp\vec\reeb\big)= S_3^X(x; \vec\reeb)~,\label{triple-anotherdef}
\eea
where the second factor comes from $H^{0,2}$ and we have as usual discarded overall multiplicative constants. The superdeterminant of $-iL_{\sreeb}^s$ is similar, one makes a shift $x\to x+\reeb^1/2$, which originates from expressing $L^s_{\sreeb}$ in terms of $L_{\sreeb}$ \cite{Qiu:2013pta}.
  In (\ref{triple-anotherdef}) we defined a new special function $S_3^X(x; \vec\reeb)$ associated with the moment map cone of any 5D simply connected toric SE manifold $X$. This function is a generalisation of the usual triple sine function, since by taking $X=S^5$, whose moment map cone $C_{\mu}(S^5)=\BB{R}_{\geq0}^3$, one recovers the definition of the standard triple sine function (\ref{mult_sine}).

 To summarise, the perturbative partition function of $N=1$ supersymmetric Yang-Mills over a 5D simply connected toric SE manifold, with hypermultiplet in representation $\udl{R}$ is given by
\bea
Z^{pert}
=\int\limits_{\FR{t}}da~e^{-\frac{8\pi^3 r}{\gYM^2}\varrho\,\Tr[a^2]}\cdotp
\frac{{\det}_{adj}' ~  S_3^X (ia; \vec\reeb)} {{\det}_{\underline{R}} ~S_3^X(ia +im +\reeb^1/2; \vec\reeb)}~,\label{Z_pert_text}\eea
where we have fixed $\vec\xi=[1,0,0]$.
Let us make a couple of concluding remarks. In the setup of the supersymmetric Yang-Mills, especially for the hypermultiplet, we have used the SE metric, and so in particular, the classical action evaluated at the localisation locus (the term in the exponent above) should be $-8r\gYM^{-2}\textrm{Vol}_{X_{SE}}\,\Tr[a^2]$ with $\textrm{Vol}_{X_{SE}}$ computed with the SE metric.
However the superdeterminant of the operator $L_{\sreeb}$ may
be computed for a Reeb being any combination of the three $U(1)$'s, provided $\vec\reeb$ is in the dual cone. These Reebs do not give rise to an SE metric,
and so we have also replaced the volume factor in the exponent by the squashed volume
\bea \textrm{Vol}_X=\varrho\pi^3~.\nn\eea
For a self-contained justification of this replacement, one should set up the supersymmetric Yang-Mills with a general Reeb, which then entails turning on an extra background connection to maintain supersymmetry.  Alternatively we may adopt the cohomological complex as the starting point, as in \cite{Kallen:2012cs}, and then this classical term appears as $\int\limits_X\gk d\gk^2\Tr[\gs^2]$, which is a supersymmetry completion of the Chern-Simons like observable $\int\limits_X\gk FF$. Since the integral of $1/2\gk d\gk d\gk$ leads to the squashed volume, it is natural to make the replacement as we did above.

Using these arguments  the answer given above should be regarded as a general equivariant answer. This is valuable since the equivariant parameters that enter into the $\vec\reeb$ can tell us about how the geometry of the underlying manifold affect the partition function, the factorisation property studied in this paper is just one such instance.
One can also study certain degeneration limits by giving these parameters special values. This will be investigated further in another publication.

\subsection{Relation between the restricted lattice and the cone descriptions}\label{sec_long_title}

In this section we show that the original presentation of the partition function in \cite{Qiu:2013pta} in terms of a constrained lattice is equivalent to the cone description given above. For those familiar with toric geometry, the equivalence is probably quite obvious and he may skip to the next section.

In \cite{Qiu:2013pta} the superdeterminant for $Y^{p,q}$ is given in terms of a generalised triple sine function, which is
defined through the $\zeta$-function regularised infinite product on a lattice
\bea
\begin{split}\label{triplegensine-def}
&\textrm{sdet}_{H^{0,\sbullet}}(-irL_{\sreeb}+x)=S_3^{\Gl_{(p,q)}}(x|  \omega_1, \omega_2, \omega_3, \omega_4) \\
&=\prod_{(i,j,k,l) \in \Gl^+_{(p,q)}}\Big(i \omega_1+j
\omega_2+k\omega_3+l\omega_4 +x\Big)\Big(i\omega_1+j\omega_2+k\omega_3+l\omega_4+\textrm{\small$\sum$}\go_i -x\Big)~,
\end{split}
\eea
  where the lattice $\Lambda^+_{(p,q)}$ is defined as
  \bea
  \Gl^+_{(p,q)}=\big\{i,j,k,l\in\BB{Z}_{\geq0}\;|\;i(p+q)+j(p-q) -kp- lp=0\big\}~,\label{lattice-1}
\eea
 and $\omega_1, \omega_2, \omega_3, \omega_4$ are equivariant parameters which are related to the Reeb vector as follows
\bea
\reeb^1=\sum\go_i~,~~~\reeb^2=-\go_1-\go_2-2\go_4~,~~~\reeb^3=-p\go_2+(q-p)\go_4~,\label{omega_reeb}
\eea
If one replaces the constraint (\ref{lattice-1}) for the lattice by
\bea
  \Gl^+_{(a,b,c)}=\big\{i,j,k,l\in\BB{Z}_{\geq0}\;|\;i a+jb -kc- l (a+b-c)=0\big\}~,\label{lattice-12}
\eea
one obtains the generalised triple sine function $S_3^{\Gl_{(a,b,c)}}$ that gives the perturbative partition function for the $L^{a,b,c}$ manifolds.
 Next we shall see  how to get these relations for a general toric SE manifold.

In general situation for any toric simply connected $X$ we assume that we have a lattice  of $\BB{Z}_{\geq0}^{\tt n}$, obeying ${\tt n} - 3>0$ constraints
\bea
\Gl^+=\{n_i\geq0,~i=1,\cdots,{\tt n}\,|\, \sum_{i=1}^{\tt n} Q_a^i n_i=0,~~a=1,\cdots,{\tt n}-3\}~.\label{restricted_lattic}
\eea
 The charges $Q_a^i $ are the same as in the description of $C(X)$ as K\"ahler quotient in (\ref{Kah-quotient}).  Introducing the squashing parameters
$\vec{\omega}=(\omega_1, ... , \omega_{\tt n})$ we define the generalised triple sine associated with the lattice $\Gl^+$
\bea
 S^{\Gl}_3(x; \vec{\omega})= \prod\limits_{\vec{n} \in \Gl^+} \big(\vec n\cdotp\vec\omega+x\big)\big(\vec n\cdotp\vec\omega-x+\sum\limits_{i=1}^{\tt n} \omega_i \big)~.\label{triple-onedef}
\eea
This was how the result was presented in \cite{Qiu:2013pta}, next we show that this is equivalent with the function $S_3^X$ define in (\ref{S_3_new}), which is a more intrinsic description.

For $X$ simply connected, we can pick the basis vectors $\vec u_1, \vec u_2, \vec u_3\in\BB{Z}^3$ such that $A=[\vec u_1, \vec u_2,\vec u_3,\vec Q_1,\cdots,\vec Q_{{\tt n}-3}]$ forms an $SL({\tt n},\BB{Z})$ matrix. Apply $A$ to the lattice $\Gl^+$, then the $\tt n$ conditions $n_i\geq0$ in (\ref{restricted_lattic}) turn into $\sum\limits_{a=1}^{3}v_i^am_a\geq0$, where $v_i^a$ are the first $3$ rows of $A^{-1}$. We denote these by $\vec v_1,\cdots,\vec v_{\tt n}$, i.e. $\vec v_i$ are $3$-vectors, and the conditions $\sum\limits_{a=1}^{3}v_i^am_a\geq0$ describes a cone inside $\BB{R}^{3}$.

As an illustration, take the lattice (\ref{lattice-1}), then $\vec Q$ is the 4-vector $[-p-q,q-p,p,p]$. One can complete it into an $SL(4,\BB{Z})$ matrix
\bea A=\left(
         \begin{array}{cccc}
           0 & -1 & a & -p-q \\
           0 & 0 & -a-2b & q-p \\
           1 & 1 & b & p \\
           0 & 0 & b & p \\
         \end{array}
       \right)~,~~~(a+b)p+bq=1~.\nn\eea
Its inverse is
\bea A^{-1}=\left(
         \begin{array}{cccc}
           1 & 1 & 1 & 1 \\
           -1 & -1 & 0 & -2 \\
           0 & -p & 0 & q-p \\
           0 & b & 0 & a+2b \\
         \end{array}\right)~,\nn\eea
and from the first three rows of $A^{-1}$ one finds the four inward normals given in (\ref{four_normal_Ypq}). Also the first three rows give the relation of the Reeb vector with $\go_i$ as in (\ref{omega_reeb}).
The above process is reversible if the moment map cone satisfies certain constraints, and the  construction
 mirrors the Delzant and Lerman constructions \cite{Delzant1988,2001math......7201L}, that is, by embedding a cone in $\BB{R}^3$ into $\BB{R}^{\tt n}$ as the intersection of $\tt n-3$ hyperplanes (whose normals are the $\vec Q_a$'s), one can present the original manifold as a K\"ahler quotient of $\BB{C}^{\tt n}$.

Continuing with our manipulation of the lattice, we let $\vec \go$ be an $\tt n$-vector, by inserting $AA^{-1}$ into $\vec n\cdotp\vec \go$, we see that the summation over the constrained lattice can be written as
\bea
\sum_{\Gl^+}\vec n\cdot\vec\go~~=\sum_{m_a\in C_{\mu}(X)\cap\BB{Z}^3}\sum_{a=1}^{3} m_a(A^{-1}\vec\go)_a~. \label{equiv_lattic_cone}
\eea
Thus we have proved the equality of the two products
\bea \prod\limits_{\vec{n} \in \Gl^+}\big(\vec n\cdotp\vec\omega+x\big)~~= \prod\limits_{\vec m\in C_{\mu}(X)\cap\BB{Z}^3}\big(\vec m\cdotp\vec \reeb+x\big)~,~~\textrm{where}~~\reeb_a=(A^{-1}\vec\go)_a~.\nn\eea
Also notice that since $\vec\xi\cdotp\vec v_i=1,~\forall i$, and that $[\vec v_1,\cdots,\vec v_{\tt n}]$ constitutes the first three rows of $A^{-1}$, so the quantity $\vec\xi\cdotp\vec\reeb$ can be written as
\bea \vec\xi\cdotp\vec\reeb=\sum_{a=1}^3\xi_a\reeb_a=\sum_{a=1}^3\xi_a(A^{-1}\vec\go)_a=\sum_{i=1}^{\tt n}\go_i~.\nn\eea
By comparing the definition (\ref{triple-onedef}) and (\ref{S_3_new}) of $S_3^{\Gl}(x,\vec\go)$ and $S_3^{X}(x,\vec\reeb)$, we get the equality
\bea
S^{\Gl}_3(x; \vec{\omega})= S^X_3(x; \vec\reeb)~,\nn\eea
and also the equivalence between the constrained lattice presentation and the cone representation.
%

Next we shall work with a general good cone that corresponds to a 5D simply connected toric SE manifold. Assume that the moment map cone has ${\tt n}\geq 4$ faces, and that the normals are chosen so that their first component is 1.
The perturbative partition is given in (\ref{Z_pert_text}) and
our central task is to evaluate the two products
\bea I:~~\prod_{\vec m\in C_{\mu}(X)\cap\BB{Z}^3}\Big(\vec m\cdotp\vec \reeb+x\Big)~,\label{sum_in_cone_I}\\
II:~~\prod_{\vec m\in C_{\mu}(X)\cap\BB{Z}^3}\Big(\vec m\cdotp\vec \reeb-x+\reeb^1\Big)~.\label{sum_in_cone_II}\eea

\section{Derivation of Factorisation}\label{sec_act_cal}

\subsection{Conversion to the Triple Sine Functions}
Since the real part of the Reeb vector $\vec\reeb$ is assumed to be within the dual cone, and that $x$ has a small but positive real part, the real part of the factors in (\ref{sum_in_cone_I}) is bounded away from zero and tends to infinity, so one can use $\zeta$-function regularisation to make sense of the infinite product. Bearing this in mind, one can treat the infinite product at its face value, and do the usual manipulations.

The product or summation over the integral points within the cone is investigated in \cite{Benvenuti:2006qr} through subdividing the cone into smaller portions. We will use similar strategies that work for any cone that gives rise to simply connected toric SE manifolds. We fix the inward normals of the cone to be $\vec v_i=[1,-\vec L_i],~i=1,\cdots,\tt n$ for some two vectors $\vec L_i=[L_i^2,L_i^3]$.
\begin{figure}[h]
\begin{center}
\begin{tikzpicture}[scale=1]
\draw [->] (-2,0) -- (2,0) node [right] {\small$y$};
\draw [->] (0,-2) -- (0,2) node [left] {\small$z$};

\draw [-,blue] (-1,-1) -- node[below] {\small$1$} (1,-1) -- node[right] {\small$2$} (1.5,.5) -- node[right] {\small$3$} (-.5,2) -- node[left] {\small$4$} (-1.5,1) -- node[left] {\small$5$} (-1,-1);

\draw (1,-1.6) ellipse (.3 and .6);
\draw (-1,-1.4) ellipse (.2 and .4);
\draw (1.5,0.8) ellipse (.18 and .3);
\draw (-0.5,2.3) ellipse (.1 and .3);
\draw (-2,1) ellipse (.5 and .2);

\node at (1.9,-1.6) {\small$(\gb,\ep,\ep')$};
\end{tikzpicture}
\begin{tikzpicture}[scale=1]
\draw [->,white] (-2,0) -- (2,0) node [right] {\small$y$};
\draw [->,white] (0,-2) -- (0,2) node [left] {\small$z$};

\draw [->,blue] (0,0) -- (0,1.5) node [above] {\small$v_1$};
\draw [->,blue] (0,0) -- (-1.2,0.4) node [above] {\small$v_2$};
\draw [->,blue] (0,0) -- (-0.9,-1.2) node [left] {\small$v_3$};
\draw [->,blue] (0,0) -- (1,-1) node [below] {\small$v_4$};
\draw [->,blue] (0,0) -- (1.6,0.4) node [above] {\small$v_5$};
\end{tikzpicture}
\caption{The polytope cone, projected onto the plane $y=1$, depending on the specific case, one of the faces may move off to infinity, that its two neighbouring faces turn parallel. The circles represent the closed Reeb orbits. The right panel is the inward pointing normals of the cone.}\label{fig_toric_cone}
\end{center}
\end{figure}
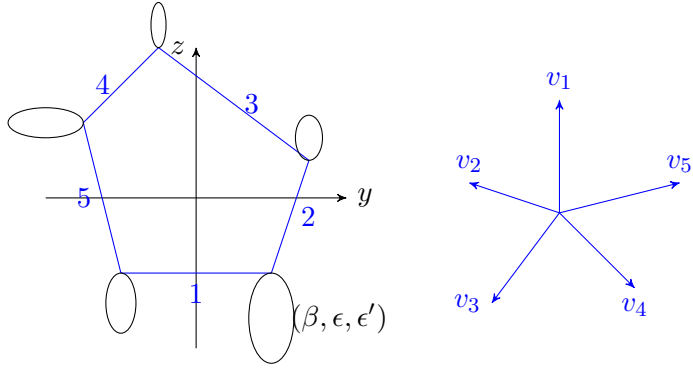
From the constraint $\vec v_i\cdotp \vec m\geq0$, the limit of $m_1$ is $\infty>m_1\geq L^2_im_2+L_i^3m_3$, which changes as $i$ increments. So we need to divide the $m_2$-$m_3$ plane into $\tt n$ (5, in the case of Figure \ref{fig_toric_cone}) wedges, one for each edge, and we get the picture of Figure \ref{fig_sub_div}. So in $W_i$ the lower limit of $m_1$ is $m_1\geq L_i^2m_2+L_i^3m_3$.
\begin{figure}[h]
\begin{center}
\begin{tikzpicture}[scale=1]

\draw [dashed,blue] (0,0) -- (-2,-2);
\draw [dashed,blue] (0,0) -- (2,-2);
\draw [dashed,blue] (0,0) -- (3,1);
\draw [dashed,blue] (0,0) -- (-.6,2.4);
\draw [dashed,blue] (0,0) -- (-2.4,1.6);

\node at (0,-1) {\small$W_1$};
\node at (1.25,-0.25) {\small$W_2$};
\node at (.5,1.25) {\small$W_3$};
\node at (-1,1.5) {\small$W_4$};
\node at (-1.25,0) {\small$W_5$};
\end{tikzpicture}
\caption{The division of the $m_2$-$m_3$ plane, each $W$ corresponds to a face of the moment map cone.}\label{fig_sub_div}
\end{center}
\end{figure}
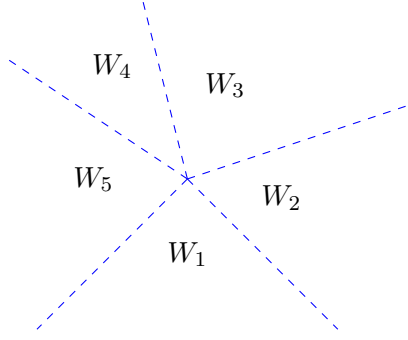
The product within each wedge reads
\bea
&&I\big|_{W_i}=\prod_{(m_2,m_3)\in W_i}\prod_{m_1\geq L^2_im_2+L_i^3m_3}\Big(\vec m\cdotp\vec \reeb+x\Big)\nn\\
&&\hspace{2cm}=\prod_{(m_2,m_3)\in W_i}\prod_{m_1\geq 0}\Big((\reeb^2+\reeb^1L_i^2)m_2+(\reeb^3+\reeb^1L_i^3)m_3+\reeb^1m_1+x\Big)~.\label{prod_I_W_i}\eea
We will denote by $\tilde \reeb_i$ the 2-vector
\bea
\tilde \reeb_i=(\reeb^2+\reeb^1 L_i^2,\reeb^3+ \reeb^1 L_i^3)~,\label{reeb_tilde}
\eea
which changes from one wedge to another.

The product over $m_1$ is now straightforward, and we have reduced the problem to the following. Consider two lines in $\BB{R}^2$ with rational slopes that bound $W_i$, how do we perform the summation (or the product, all the same) of the weight $\vec\xi\cdotp\vec n=\xi_1n_1+\xi_2n_2$
over the integral points between these two lines? We assume that the normals of the two lines are $v_1,\;v_2$, which are primitive integer 2-vectors, see Figure \ref{fig_add_lines}. Then we have the sum
\bea
\sum_{\vec n\cdotp v_1\geq0;~\vec n\cdotp v_2\leq0}\vec\xi\cdotp\vec n~.\nn
\eea
The strategy is to add more lines between the two given lines, so that the two normals of each pair of neighbouring lines form an $SL(2,\BB{Z})$ matrix, then one can, by an $SL(2,\BB{Z})$ matrix, transform the two lines into the $x$- and $y$-axis, in which situation the sum would be simple. Surely, one cannot know how many lines one would need to add, but so long as the process contains only finite number of steps, which we show next, the lack of explicitness need not hinder us.

Without loss of generality, one can assume $v_1=[0,1]$, i.e. the first line is the $x$-axis (by applying an $SL(2,\BB{Z})$ transformation, since $v_1$ is primitive). For definiteness, we also assume $v_2=[-p,q]$ with $\gcd(p,q)=1,~p,q>0$, the other possibilities can be treated entirely similarly, see Figure \ref{fig_add_lines}.
\begin{figure}[h]
\begin{center}
\begin{tikzpicture}[scale=.8]
\draw [->,blue] (0,0) -- (2,1.5) -- (1.25,2.5) node[above] {\small$v_2$};
\draw [->,blue] (0,0) -- (2.4,0) -- (2.4,1) node[right] {\small$v_1$};
\end{tikzpicture}
\hspace{2cm}
\begin{tikzpicture}[scale=.8]
\draw [blue] (0,0) -- (1,3) node[right] {\small{$[-23,17]$}};
\draw [red] (0,0) -- (1.4,2.1) node[right] {\small{$[-4,3]$}};
\draw [red] (0,0) -- (2,1) node[right] {\small{$[-1,1]$}};
\draw [blue] (0,0) -- (1.5,0) node[right] {\small{[0,1]}};
\end{tikzpicture}\caption{Sum between the two blue lines, depicted in the left panel. One can add more lines in between, as in the right panel. The numbers label the \emph{normal} of each line. The slopes of the lines are not drawn to scale.}\label{fig_add_lines}
\end{center}
\end{figure}
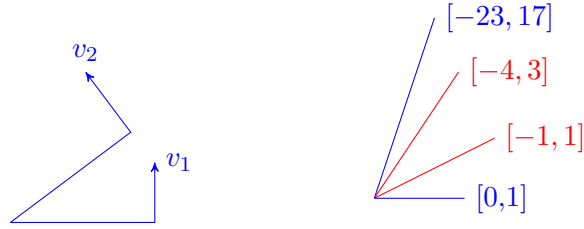
One simply observes that given two numbers $p,q>0$ coprime, one can find $s,t>0$ such that $pt-qs=1$ and that $p>s$, $q>t$. The proof is a simple exercise and is left for the reader,
otherwise consult \cite{Fulton}.
Then it is easy to see that $p/q>s/t$ so the new line has a smaller slope. Further $\det[-s,t;-p,q]=1$, which is part of we set out to achieve. One can continue this process, since the size $s,t$ as well as the slope decreases each time, the process will stop after finitely many steps. It is not at all important to know exactly about the lines added, so long as they exist.

We will now further subdivide each wedge of Figure \ref{fig_sub_div} using the algorithm described above, and get Figure \ref{fig_sub_div_II}. We denote by $\vec u_k$ the normals (counterclockwise pointing) of all the lines.
\begin{figure}[h]
\begin{center}
\begin{tikzpicture}[scale=1]

\draw [dashed,blue] (0,0) -- (-2,-2);
\draw [dashed,blue] (0,0) -- (2,-2) node[below] {\scriptsize{$i$}};;
\draw [dashed,blue] (0,0) -- (3,1);
\draw [dashed,blue] (0,0) -- (-.6,2.4);
\draw [dashed,blue] (0,0) -- (-2.4,1.6);

\node at (0,-1) {\small$W_1$};
\node at (1.25,-0.25) {\small$W_2$};
\node at (.5,1.25) {\small$W_3$};
\node at (-1,1.5) {\small$W_4$};
\node at (-1.25,0) {\small$W_5$};

\draw [dotted,blue] (0,0) -- (-1.4,-2) node[below] {\scriptsize{$k$}};
\draw [dotted,blue] (0,0) -- (-.8,-2) node[below] {\scriptsize{$k+1$}};
\node at (0,-2) {\small$\cdots$};
\draw [dotted,blue] (0,0) -- (.8,-2) node[below] {\scriptsize{$i-1$}};
\draw [dotted,blue] (0,0) -- (2.4,-1.2) node[below] {\scriptsize{$i+1$}};;
\draw [dotted,blue] (0,0) -- (2.4,0.2);
\node at (2.4,-0.2) {\small$\vdots$};
\draw [dotted,blue] (0,0) -- (2.8,1.5);
\node at (1.4,2) {\small$\ddots$};
\draw [dotted,blue] (0,0) -- (0,2.5);
\draw [dotted,blue] (0,0) -- (-2.4,1);
\node at (-2.4,-0.2) {\small$\vdots$};
\draw [dotted,blue] (0,0) -- (-2.4,-1.6);
\end{tikzpicture}
\caption{Further division of the $m_2$-$m_3$ plane, by adding lines. The normals of all lines are pointing counterclockwise.}\label{fig_sub_div_II}
\end{center}
\end{figure}
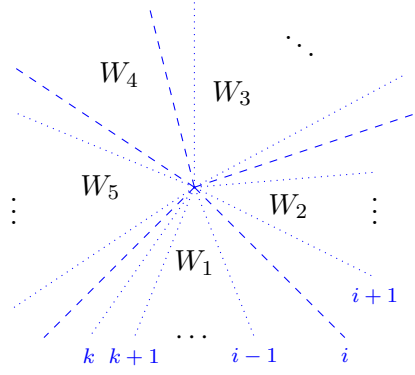
Now we have myriads of wedges over which we need to do the sum, as an example we consider first the product of (\ref{prod_I_W_i}) from (and including) the line $k$ up to (but excluding) the line $k+1$
\bea
 I\big|_{[k,k+1)}=\prod_{\vec n\cdot \vec u_k\geq0,\vec n\cdot \vec u_{k+1}<0}\prod_{m\geq 0}\Big(\tilde\reeb_1\cdotp \vec n+\reeb^1m+x\Big)~,\label{notation_ref}
 \eea
where $\tilde\reeb$ is defined in (\ref{reeb_tilde}). Since by assumption $\det[\vec u_k,\vec u_{k+1}]=1$, the product is simply
\bea I\big|_{[k,k+1)}&=&\prod_{\vec n\cdot \vec u_k\geq0,\vec n\cdot \vec u_{k+1}<0}\prod_{m\geq 0}\Big((\vec n\cdotp \vec u_k)(\tilde\reeb_1\times \vec u_{k+1})-(\vec n\cdotp \vec u_{k+1})(\tilde\reeb_1\times \vec u_k)+\reeb^1m+x\Big)\nn\\
&=&\prod_{m^{1,2,3}\geq0}\Big(m^2(\tilde\reeb_1\times \vec u_{k+1})+m^3(\tilde\reeb_1\times \vec u_k)+m^1\reeb^1+x+(\tilde\reeb_1\times \vec u_k)\Big)\nn\\
&=&\Gc_3\big(x+(\tilde\reeb_1\times \vec u_k)\big|\tilde\reeb_1\times \vec u_{k+1},\tilde\reeb_1\times \vec u_k,\reeb^1\big)^{-1}~,\nn\eea
where we use the short hand notation $\vec u\times\vec v=\det[u,v]$ for 2-vectors.

Before we do the product of the second factor in (\ref{sum_in_cone_II}), we need to make a technical remark. From the way all the dividing lines are chosen, one has
\bea
 \vec L_1+\vec u_i=\vec L_2~.\label{technical_detail}
 \eea
To see this, note the line $[y,z]$ separating $W_1$ and $W_2$ satisfies $[y,z]\cdotp (\vec L_1-\vec L_2)=0$, so its normal $\vec u_i$ is parallel to $\vec L_1-\vec L_2$. From the goodness of the cone, the 2-vector $\vec L_1-\vec L_2$ is primitive, thus $\vec u_i=\pm(\vec L_1-\vec L_2)$, and a little more thought would reveal the right sign. This relation holds for every line that separates two wedges $W_l$ and $W_{l+1}$.

The previous observation has the consequence that
\bea
 \tilde \reeb_1\times \vec u_i=\tilde \reeb_2\times \vec u_i~,\label{sum_in_cone_III}
 \eea
and thus it does not matter if one includes the contribution along the line $i$ in $W_1$ or $W_2$. Now for the second product (\ref{sum_in_cone_II}), we use this freedom to perform the product from (but excluding) the line $k$ up to (and including) the line $k+1$
\bea
II\big|_{(k,k+1]}&=&\prod_{\vec n\cdot \vec u_k>0,\vec n\cdot \vec u_{k+1}\leq0}\prod_{m\geq 0}\Big(\tilde\reeb_1\cdotp \vec n+\reeb^1m+\reeb^1-x\Big)\nn\\
&=&\prod_{m^{1,2,3}\geq0}\Big(m^2(\tilde\reeb_1\times \vec u_{k+1})+m^3(\tilde\reeb_1\times \vec u_k)+m^1\reeb^1+\reeb^1-x+(\tilde\reeb_1\times \vec u_{k+1})\Big)\nn\\
&=&\Gc_3\big(-x+\reeb^1+(\tilde\reeb_1\times \vec u_{k+1})\big|\tilde\reeb_1\times \vec u_{k+1},\tilde\reeb_1\times \vec u_k,\reeb^1\big)^{-1}~.\nn
\eea
Now one can combine $I\big|_{[k,k+1)}$ and $II\big|_{(k,k+1]}$, one gets the triple sine function
\bea I\big|_{[k,k+1)}\times II\big|_{(k,k+1]}=S_3\big(x+(\tilde\reeb_1\times \vec u_k)\big|\tilde\reeb_1\times \vec u_{k+1},\tilde\reeb_1\times \vec u_k,\reeb^1\big)~.\label{triple_sine_wedge}\eea

Note that in this way of dividing the cone, one always misses the points in (\ref{notation_ref}) with $\vec n=0,\,m\geq0$, but this can be done easily, and one gets
\bea
I_0=\prod_{m\geq 0}\big(\reeb^1m+x\big)\label{I_0}\eea
and for the factor $II$
\bea
 II_0=\prod_{m\geq 1}\big(\reeb^1m-x\big)~.\label{II_0}\eea
These two terms give
\bea
 I_0\cdotp II_0\sim \sin(\frac{\pi x}{\reeb^1})\sim e^{-\pi i\frac{x}{\sreeb^1}}(1-e^{2\pi i\frac{x}{\sreeb^1}})~.\label{extra_0}\eea
where $\sim$ means up to an overall multiplicative constant.   For the hypermultiplet, instead of (\ref{extra_0}), we will get
\bea 
e^{\pi i\frac{(x+im+\sreeb^1/2)}{\sreeb^1}}(1-e^{2\pi i\frac{(x+im+\sreeb^1/2)}{\sreeb^1}})^{-1}~.\label{extra_0_hyper}\eea
Later on, the second factors of (\ref{extra_0}) and (\ref{extra_0_hyper}) will cancel against terms coming from the $S_3$ function, while the first will be combined with the Bernoulli factors.

\subsection{Factorisation of the Triple Sines}

We will use the factorisation formula for the triple sine \cite{MR2101221}
\bea
&&  S_3 (z | \omega_1, \omega_2, \omega_3) = e^{-\frac{\pi i}{6} B_{3,3} (x|\omega_1, \omega_2, \omega_3)}
   (e^{2\pi i z/\go_2} ; e^{2\pi i \go_1/\go_2}, e^{2\pi i \go_3/\go_2})_\infty \nn \\
&&\hspace{2.5cm}\times (e^{2\pi i z/\go_1} ; e^{2\pi i \go_3/\go_1}, e^{2\pi i \go_2/\go_1})_\infty
   (e^{2\pi i z/\go_3} ; e^{2\pi i \go_1/\go_3}, e^{2\pi i \go_2/\go_3})_\infty~,\label{app-fact-33}\eea
and in what follows we will write $(x|y,z)$ instead of $(e^{2\pi ix};e^{2\pi iy},e^{2\pi iz})_{\infty}$.

Now the expression (\ref{triple_sine_wedge}) can be factorised
\bea
(\ref{triple_sine_wedge})&=&B\cdotp\big(\frac{x+\tilde\reeb_1\times \vec u_k}{\reeb^1}\big|\frac{\tilde\reeb_1\times \vec u_k}{\reeb^1},\frac{\tilde\reeb_1\times \vec u_{k+1}}{\reeb^1}\big)\nn\\
&&\hspace{1cm}
\big(\frac{x+\tilde\reeb_1\times\vec u_k}{\tilde\reeb_1\times \vec u_k}\big|\frac{\reeb^1}{\tilde\reeb_1\times\vec u_k},\frac{\tilde\reeb_1\times \vec u_{k+1}}{\tilde\reeb_1\times \vec u_k}\big)
\big(\frac{x+\tilde\reeb\times \vec u_k}{\tilde\reeb_1\times \vec u_{k+1}}\big|\frac{\reeb^1}{\tilde\reeb_1\times \vec u_{k+1}},\frac{\tilde\reeb_1\times \vec u_k}{\tilde\reeb_1\times \vec u_{k+1}}\big)\nn\\
&=&B\cdotp\big(\frac{x}{\reeb^1}\big|-\frac{\tilde\reeb_1\times \vec u_k}{\reeb^1},\frac{\tilde\reeb_1\times \vec u_{k+1}}{\reeb^1}\big)^{-1}\nn\\
&&
\big(\frac{x}{\tilde\reeb_1\times \vec u_k}\big|\frac{\reeb^1}{\tilde\reeb_1\times\vec u_k},\frac{\tilde\reeb_1\times \vec u_{k+1}}{\tilde\reeb_1\times \vec u_k}\big)
\big(\frac{x}{\tilde\reeb_1\times \vec u_{k+1}}\big|\frac{\reeb^1}{\tilde\reeb_1\times \vec u_{k+1}},-\frac{\tilde\reeb_1\times \vec u_k}{\tilde\reeb_1\times \vec u_{k+1}}\big)^{-1}\label{intermediate}~,\eea
where $B$ is the Bernoulli polynomial that we shall collect in subsection \ref{sec_BP} and we have also used (\ref{app-relations-1}).
 One can also use the factorisation in (\ref{factorisation_sine_sara}), then the Bernoulli polynomials do not occur.

The second factor of the first line of (\ref{intermediate}) can be simplified into
\bea \prod_k\big(\frac{x}{\reeb^1}\big|-\frac{\reeb^{\perp}\times \vec u_k}{\reeb^1},\frac{\reeb^{\perp}\times \vec u_{k+1}}{\reeb^1}\big)^{-1},\nn\eea
where $\reeb^{\perp}$ is the second and third component of $\vec\reeb$, i.e. $\reeb^{\perp}=\tilde\reeb_i-\reeb^1\vec L_i=[\reeb^2,\reeb^3]$. This manipulation is justified by using the periodicity of $(-|-,-)$. In appendix \ref{sec_Lemma} this product  is shown  to be
\bea \prod_k\big(\frac{x}{\reeb^1}\big|-\frac{\reeb^{\perp}\times \vec u_k}{\reeb^1},\frac{\reeb^{\perp}\times \vec u_{k+1}}{\reeb^1}\big)^{-1}=\big(1-\exp\big(\frac{2\pi ix}{\reeb^1}\big)\big)^{-1}.\nn\eea
This factor will cancel the second factor in (\ref{extra_0}) (or (\ref{extra_0_hyper}) in the case of hypermultiplet).

In the rest of this section, we focus on the second line of (\ref{intermediate}), which will give us a copy of the Nekrasov partition function for each corner of the moment map cone.
For every three neighbouring lines, say, $k-1$, $k$ and $k+1$ that are in the same wedge $W_1$, we will get the contribution
\bea
\big(\frac{x}{\tilde\reeb_1\times \vec u_k}\big|\frac{\reeb^1}{\tilde\reeb_1\times\vec u_k},\frac{\tilde\reeb_1\times \vec u_{k+1}}{\tilde\reeb_1\times \vec u_k}\big)
\big(\frac{x}{\tilde\reeb_1\times \vec u_k}\big|\frac{\reeb^1}{\tilde\reeb_1\times \vec u_k},-\frac{\tilde\reeb_1\times \vec u_{k-1}}{\tilde\reeb_1\times\vec u_k}\big)^{-1}~.\nn\eea
Here we make the observation that since $\vec u_{k-1}\times\vec u_k=\vec u_k\times\vec u_{k+1}=1$, one has
\bea
\vec u_{k-1}+\vec u_{k+1}=\BB{Z}\vec u_k~.\label{technical_detail_II}\eea
Consequently $\tilde \reeb_1\times\vec u_{k+1}+\tilde \reeb_1\times\vec u_{k-1}=\BB{Z}\tilde\reeb_1\times\vec u_k$, and the above combination cancels by using the periodicity of the special function $(-|-,-)$.

In contrast, take three lines as $i-1$, $i$ and $i+1$ with $i$ straddling two wedges $W_1$, $W_2$, then one gets instead the contribution
\bea\star=
\big(\frac{x}{\tilde\reeb_2\times \vec u_i}\big|\frac{\reeb^1}{\tilde\reeb_2\times\vec u_i},\frac{\tilde\reeb_2\times \vec u_{i+1}}{\tilde\reeb_2\times \vec u_i}\big)
\big(\frac{x}{\tilde\reeb_1\times \vec u_i}\big|\frac{\reeb^1}{\tilde\reeb_1\times \vec u_i},-\frac{\tilde\reeb_1\times \vec u_{i-1}}{\tilde\reeb_1\times \vec u_i}\big)^{-1}~.\nn\eea
One uses then (\ref{technical_detail_II}) and  (\ref{sum_in_cone_III}) to get
\bea\star=
\big(\frac{x}{\tilde\reeb_1\times \vec u_i}\big|\frac{\reeb^1}{\tilde\reeb_1\times\vec u_i},-\frac{\tilde\reeb_2\times \vec u_{i-1}}{\tilde\reeb_1\times \vec u_i}\big)
\big(\frac{x}{\tilde\reeb_1\times \vec u_i}\big|\frac{\reeb^1}{\tilde\reeb_1\times \vec u_i},-\frac{\tilde\reeb_1\times \vec u_{i-1}}{\tilde\reeb_1\times \vec u_i}\big)^{-1}~,\nn\eea
and that
\bea -\frac{\tilde\reeb_2\times \vec u_{i-1}}{\tilde\reeb_1\times \vec u_i}=-\frac{(\tilde\reeb_1+\reeb^1\vec u_i)\times \vec u_{i-1}}{\tilde\reeb_1\times \vec u_i}
=-\frac{\tilde\reeb_1\times \vec u_{i-1}-\reeb^1 }{\tilde\reeb_1\times \vec u_i}~.\label{technical_detail_IV}\eea
Now one invokes (\ref{modular_new}) and combine the two factors of $\star$
\bea \star=
\big(\frac{x}{\tilde\reeb_1\times \vec u_i}\big|-\frac{\tilde\reeb_2\times \vec u_{i-1}}{\tilde\reeb_1\times \vec u_i},\frac{\tilde\reeb_1\times \vec u_{i-1}}{\tilde\reeb_1\times \vec u_i}\big)=\big(\frac{x}{\tilde\reeb_1\times \vec u_i}\big|\frac{\tilde\reeb_2\times \vec u_{i+1}}{\tilde\reeb_1\times \vec u_i},\frac{\tilde\reeb_1\times \vec u_{i-1}}{\tilde\reeb_1\times \vec u_i}\big)~.\label{corner}\eea
To conclude, apart from the Bernoulli polynomials, the partition function receives a contribution of (\ref{corner}), for \emph{every corner of the moment map cone}.  If one were to use the factorisation (\ref{factorisation_sine_sara}), then the second factor there combines in a similar fashion into
\begin{align}
\star' &=
\big(-\frac{x}{\tilde\reeb_1\times \vec u_i}\big|\frac{\tilde\reeb_1\times \vec u_{i-1}-\reeb^1}{\tilde\reeb_1\times \vec u_i},-\frac{\tilde\reeb_1\times \vec u_{i-1}}{\tilde\reeb_1\times \vec u_i}\big) \nn \\
&=\big(\frac{\reeb^1-x}{\tilde\reeb_1\times \vec u_i}\big|-\frac{\tilde\reeb_1\times \vec u_{i-1}-\reeb^1}{\tilde\reeb_1\times \vec u_i},\frac{\tilde\reeb_1\times \vec u_{i-1}}{\tilde\reeb_1\times \vec u_i}\big)~.\label{corner_prime}
\end{align}
The same manipulation applies to the hypermultiplet, one needs only replace in the above formulae $x\to x+\reeb^1/2+im$.

Next we will show that this factor is the perturbative Nekrasov partition function on $S^1\times\BB{C}^2$.
Since the wedges $W$ correspond to the faces of the moment map cone, one observes that if the normals to face 1 and 2 are $\vec v$ and $\vec v'$, i.e. $\vec v=[1,-\vec L_1]$ and $\vec v'=[1,-\vec L_2]$, then
\bea
\det[\vec v,\vec v',\vec\reeb]=\det\left(
                                          \begin{array}{ccc}
                                            1 & 1 & \reeb^1 \\
                                            -L_1^2 & -L_2^2 & \reeb^2 \\
                                            -L_1^3 & -L_2^2 & \reeb^3 \\
                                          \end{array}\right)=\tilde\reeb_1\times\vec u_i~.\nn\eea
Thus one recognizes the quantity $\tilde\reeb_1\times\vec u_i$ as the inverse circumference $2\pi/\gb$ of the Reeb orbit above the corner at the intersection of face 1 and 2
 (see Figure \ref{fig_toric_cone}).

For the equivariant parameters, let $\vec n=[0,-\vec u_{i+1}]$, one observes that
\bea \det[\vec v,\vec v',\vec n]=\det\left(
                                                  \begin{array}{ccc}
                                                    1 & 1 & 0 \\
                                                    -\vec L_1 & -\vec L_2 & -\vec u_{i+1} \\ \end{array}\right)=
\det\left(
                                                  \begin{array}{ccc}
                                                    0 & 1 & 0 \\
                                                    \vec u_i & -\vec L_2 & -\vec u_{i+1} \\ \end{array}\right)=1~.\nn\eea
Then from the recipe (\ref{recipe_bee}) for $\ep,\;\ep'$, one gets
\bea
&&\ep=\det[\vec n,\vec \reeb,\vec v']=\tilde\reeb_2\times\vec u_{i+1}~,\nn\\
&& \ep'=\det[\vec v,\vec \reeb,\vec n]=-\tilde\reeb_1\times\vec u_{i+1}=\tilde\reeb_1\times\vec u_{i-1}+\BB{Z}\tilde\reeb_1\times\vec u_i~.\nn\eea
From this we see that the partition function receives one copy of the perturbative Nekrasov partition function for each corner of the toric moment cone, or for each closed Reeb orbit, with the expected equivariant parameters
\bea
 \star=\big(\frac{\gb}{2\pi} x\big|\frac{\gb}{2\pi}\ep,\frac{\gb}{2\pi}\ep'\big)~,~~~\star'=\big(\frac{\gb}{2\pi}(\reeb^1-x)\big|\frac{\gb}{2\pi}\ep,\frac{\gb}{2\pi}\ep'\big)~.\nn\eea

If we adopt the second factorisation of the triple sine (\ref{factorisation_sine_sara}), we will get the following
\bea
Z^{pert}=
\int\limits_{\FR{t}}
\resizebox{0.78\hsize}{!} {$
da~e^{-\frac{8\pi^3 r}{\gYM^2}\varrho\,\Tr[a^2]}\cdotp\frac{\prod\limits_{i=1}^{\tt n}\big({\det}_{adj}'\big(i\frac{\gb_i}{2\pi}a\big|\frac{\gb_i}{2\pi}\ep_i,\frac{\gb_i}{2\pi}\ep_i'\big)\big(a\to -i\reeb^1-a\big)\big)^{1/2}}{\prod\limits_{i=1}^{\tt n}\big(\det_{\underline{R}}\big(i\frac{\gb_i}{2\pi}(a+m-i\reeb^1/2)\big|\frac{\gb_i}{2\pi}\ep_i,\frac{\gb_i}{2\pi}\ep_i'\big)\big(a+m\to -a-m\big)\big)^{1/2}}
$}\label{Z_pert_fac}
\eea
where the index $i$ runs over all the $\tt n$ closed Reeb orbits. This way of writing the factorization, though involving a square root, is manifestly symmetric under $\udl{R}\to\udl{\bar R}$.

\subsection{Collection of the Bernoulli Polynomials}\label{sec_BP}

In this section we collect the Bernoulli polynomials left over from (\ref{intermediate}). The Bernoulli polynomial $B_{3,3}$ is defined in (\ref{bernoulli}).
From the contribution from line $k$ to line $k+1$, one receives
\bea
-\frac{\pi i}{6}B_{3,3}\big(x+(\tilde\reeb_1\times \vec u_k)\big|\tilde\reeb_1\times \vec u_{k+1},\tilde\reeb_1\times \vec u_k,\reeb^1\big)
=\frac{\pi i}{6}B_{3,3}\big(x\big|\tilde\reeb_1\times \vec u_{k+1},-\tilde\reeb_1\times \vec u_k,\reeb^1\big)~,\nn\eea
where (\ref{bernoulli_prop}) is used.

We collect the $x^3$ term first
\bea
 \textrm{coef of } x^3=\frac{\pi i}{6}\frac{1}{\reeb^1(\tilde\reeb_1\times\vec u_{k+1})(-\tilde\reeb_1\times \vec u_k)}~.\nn\eea
The right hand side is actually proportional to the area of the part of a face (face 1 in this particular instance, see Figure \ref{fig_toric_cone}) bounded by the three planes $\vec y\cdotp[0,\vec u_k]=0$, $\vec y\cdotp[0,\vec u_{k+1}]=0$ and $\vec y\cdotp\vec\reeb=1/2$, $\vec y\in\BB{R}^3$, see Figure \ref{fig_erea_triangle}.
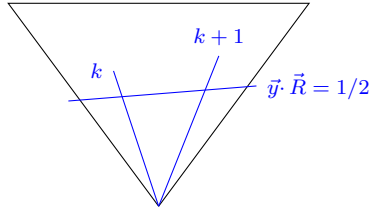
\begin{figure}[h]
\begin{center}
\begin{tikzpicture}[scale=1]
\draw (0,0) -- (2,2.7) -- (-2,2.7) -- (0,0);
\draw [-,blue] (0,0) -- (.8,2) node[above] {\scriptsize{$k+1$}};
\draw [-,blue] (0,0) -- (-.6,1.8) node[left] {\scriptsize{$k$}};
\draw [-,blue] (-1.2,1.4) -- (1.3,1.6) node[right] {\scriptsize{$\vec y\cdotp \vec\reeb=1/2$}};
\end{tikzpicture}
\caption{The big triangle is the face 1, and we are interested in the area enclosed on face 1 by three planes: $\vec y\cdotp[0,\vec u_k]=0$, $\vec y\cdotp[0,\vec u_{k+1}]=0$ and $\vec y\cdotp\vec\reeb=1/2$.}\label{fig_erea_triangle}
\end{center}
\end{figure}
Indeed, the area is given by the expression
\bea
A=\frac18 \frac{|w_2|\det[w_1,w_3,w_2]}{\det[w_1,w_2,\vec\reeb]\cdotp\det[w_3,w_2,\vec\reeb]}~,~~~\textrm{where}~~w_1=[0,\vec u_k]~,~w_2=[1,-\vec L_1]~,~w_3=[0,\vec u_{k+1}]~.\nn\eea
Working this out, we have
\bea A=\frac18 \frac{|[1,-\vec L_1]|}{(\tilde\reeb_1\times u_k)(\tilde\reeb_1\times\vec u_{k+1})}~.\nn\eea
Coming back to the coefficient of $x^3$, summing over $k$ we get
\bea
 \textrm{coef of } x^3=-\frac{4\pi i}{3\reeb^1}\sum_{i=1}^{\tt n}\frac{1}{|\vec v_i|}A_i~,\nn
 \eea
where $A_i$ is the area of face $i$ topped off by the plane $\vec y\cdotp \vec\reeb=1/2$, and $i$ runs over all faces.

We collect the $x^2$ term next
\bea
\textrm{coef of } x^2&=&-\frac{\pi i}{4}\frac{\reeb^1+\tilde\reeb_1\times(\vec u_{k+1}-\vec u_k)}{\reeb^1(\tilde\reeb_1\times\vec u_{k+1})(-\tilde\reeb_1\times \vec u_k)}\nn\\
&=&\frac{\pi i}{4}\Big(\frac{1}{(\tilde\reeb_1\times\vec u_{k+1})(\tilde\reeb_1\times \vec u_k)}+\frac{1}{\reeb^1(\tilde\reeb_1\times \vec u_k)}
-\frac{1}{\reeb^1(\tilde\reeb_1\times\vec u_{k+1})}\Big)~.\nn\eea
The last two terms will drop once we sum over all $k$ (using again (\ref{sum_in_cone_III})).
and summing over all $k$, one gets
\bea
\textrm{coef of } x^2=2\pi i\sum_{i=1}^{\tt n}\,\frac{1}{|v_i|}A_i~.\nn\eea

For the $x^1$ term, we get
\bea
 \textrm{coef of }x^1=\frac{\pi i}{12}\Big(\frac{\go_1}{\go_2\go_3}+\frac{1}{\go_1}\big(\frac{\go_2}{\go_3}+\frac{\go_3}{\go_2}+3\big)+3(\frac{1}{\go_3}+\frac{1}{\go_2})\big)~,\nn\eea
where $\go_1=\reeb^1$, $\go_2=\tilde\reeb_1\times \vec u_{k+1}$ and  $\go_3=-\tilde\reeb_1\times \vec u_k$. Taking the sum over $k$, the last term will drop, and the first term has been dealt with above. For the middle term, we only need to investigate the following
\bea \sum_k\Big(-3+\frac{\tilde\reeb_1\times \vec u_{k+1}}{\tilde\reeb_1\times \vec u_k}+\frac{\tilde\reeb_1\times \vec u_k}{\tilde\reeb_1\times \vec u_{k+1}}\Big)
~.\nn\eea
Using (\ref{technical_detail_II}),
\bea \frac{\tilde\reeb_1\times (\vec u_{k-1}+\vec u_{k+1})}{\tilde\reeb_1\times \vec u_k}\in \BB{Z}\nn\eea
if $k-1$, $k$ and $k+1$ are in the same wedge. Otherwise, if $k$ separates $W_1$ and $W_2$ one gets
\bea \frac{\tilde\reeb_2\times\vec u_{k+1}}{\tilde\reeb_2\times \vec u_k}+\frac{\tilde\reeb_1\times \vec u_{k-1}}{\tilde\reeb_1\times \vec u_k}=
\frac{-\tilde\reeb_2\times\vec u_{k-1}+\tilde\reeb_1\times \vec u_{k-1}}{\tilde\reeb_1\times \vec u_k}
+\BB{Z}=\frac{\reeb^1}{\det[\vec\reeb,\vec v_1,\vec v_2]}+\BB{Z}~,\nn\eea
where (\ref{technical_detail_IV}) is used.
And one recognize the last combination as proportional to the circumference of the closed Reeb orbits at the corner of the intersection of faces $1$ and $2$. In total the $x^1$ term gives
\bea \textrm{coef of }x^1=\frac{\pi i}{12}\Big(-8\reeb^1\sum_{i=1}^{\tt n}\frac{1}{|\vec v_i|}A_i-\frac{1}{2\pi}\sum_{i=1}^{\tt n}\gb_i-\frac{c}{\reeb^1}\Big)~,\nn\eea
where the undetermined integer is named $c$ and it will be shown to be $-12$ at the end of this section.

Finally we come to the $x^0$ term. One might wonder why do we bother with this since it is just a constant and we have been discarding constants all along, but the point is that the same type of terms appearing here will appear in the asymptotic behaviour of the partition function where they will be important. The $B_{3,3}$ has the constant term
\bea
\textrm{coef of }x^0=-\frac{\pi i}{24}\Big(3+\frac{\go_1}{\go_2}+\frac{\go_2}{\go_1}+\frac{\go_2}{\go_3}+\frac{\go_3}{\go_2}+\frac{\go_3}{\go_1}+\frac{\go_1}{\go_3}\Big)~,\nn\eea
with the same $\go$'s as above.
Taking the sum over $k$ one gets
\bea \frac{\pi i}{24}\Big(\frac{\reeb^1}{2\pi}\sum_{i=1}^{\tt n}\gb_i+c\Big)~.\nn\eea

To summarise the collection of Bernoulli polynomials gives
\bea
\pi i\big(-\frac{4x^3}{3\reeb^1}+2x^2-\frac{2}{3}\reeb^1x\big)\sum_{i=1}^{\tt n}\frac{1}{|\vec v_i|}A_i+\pi i\big(-\frac{1}{12}x+\frac{1}{24}\reeb^1\big)\frac{1}{2\pi}\sum_{i=1}^{\tt n}\gb_i+\pi ic\big(-\frac{x}{12\reeb^1}+\frac{1}{24}\big)~.\nn\eea
As an aside when the cone corresponds to a CY toric manifold, which is the case we are dealing with, one can write the sum of volume of faces above as the volume of the manifold $X$.
One uses the fact that the end points of the normals $\vec v_i$ lie on a hyperplane, the sum of the volume of the faces above can be written as
\bea
 \sum_{i=1}^{\tt n}\,\frac{1}{|v_i|}A_i=6\reeb^1\textrm{vol}_{\Gd_{\sreeb}^{1/2}}=\frac{\reeb^1}{(2\pi)^3}\textrm{vol}_X~,\label{volume}\eea
where $\Gd_{\sreeb}^{1/2}$ is the intersection $C_{\mu}(X)\cap\{\vec r\in\BB{R}^3|\vec r\cdotp\vec\reeb\leq1/2\}$. The above relation is derived in \cite{Martelli:2005tp}, it was also shown in that paper that
\bea
\int\limits_{C(X)^1}R_{C(X)}=(2\reeb^1-6)\textrm{vol}_X~,\label{ricci_scalar}
\eea
where $R_{C(X)}$ is the Ricci scalar of the metric cone $C(X)$, and $C(X)^1$ is the metric cone cut off at $r\leq1$, see section \ref{sec_TSEM} for notations.

To apply this result to the vector multiplet, one can discard the odd powers of $x$, since $x=i\bra a,\gl\ket$, $a\in \FR{t}$ and $\gl$ runs over all the roots, so the odd powers of $x$ cancel out. We get
\bea
B_{vec}(x)=2\pi ix^2\sum_{i=1}^{\tt n}\frac{1}{|\vec v_i|}A_i+\frac{\pi i}{24}\frac{\reeb^1}{2\pi}\sum_{i=1}^{\tt n}\gb_i-\frac{i\pi}{2}~.\label{Bvec}
\eea
For a hypermultiplet with mass $m$,  one needs to remember the contribution from the first factor of (\ref{extra_0_hyper}), and one gets
\bea
B_{hyp}(x)=\frac{4\pi i}{3}\big(\frac{1}{\reeb^1}(x+im)^3-\frac{1}{4}\reeb^1(x+im)\big)\sum_{i=1}^{\tt n}\frac{1}{|\vec v_i|}A_i+\frac{\pi i}{12}\big(x+im\big)\frac{1}{2\pi}\sum_{i=1}^{\tt n}\gb_i+\frac{\pi i}{2}.\label{Bhyp}\eea

We will now prove $c=-12$ (see page 44 \cite{Fulton}). First, one needs to establish that given a subdivision of the plane, the number $c$ is unchanged if one inserts further lines. To see this, let $\vec v_{i-2}$ $\vec v_{i-1}$, $\vec v_i$ and $\vec v_{i+1}$ be the normals to four consecutive lines such that $\vec v_k\times \vec v_{k+1}=1,~k=i-2,\cdots,i$, and we can assume that $\vec v_{i-1}=[-1,0]$ and $\vec v_i=[0,1]$. We insert a fifth line between $i-1$ and $i$, with normal $\vec u$, then one must have $\vec v_{i-1}+\vec v_i=\vec u=[-1,1]$. Doing this would change $c$ by
\bea
\gd c=\frac{\tilde\reeb\times(\vec u-\vec v_{i-1})}{\tilde\reeb\times \vec v_i}-3+\frac{\tilde\reeb\times(\vec v_{i+1}+\vec v_{i-1})}{\tilde\reeb\times \vec u}
+\frac{\tilde\reeb\times(\vec u-\vec v_i)}{\tilde\reeb\times \vec v_{i-1}}=1-3+1+1=0~.\nn\eea
One can go further and establish that $c$ does not change if we add $k$ redundant lines in between $i-1$ and $i$. To see this, if one of the $k$ lines we add has normal $\vec u=[-1,1]$, then since $\vec u\times \vec v_{i-1}=\vec v_i\times \vec u=1$, and there are fewer lines between either $\vec u$, $\vec v_{i-1}$ or $\vec u$, $\vec v_i$, and the proof follows from an induction. Next we show that such a line can always be found among the $k$ lines. Assume first that all $k$ lines are between $\vec u$ and $\vec v_{i-1}$ (resp. $\vec v_i$), then the last (resp. first) of these lines must have normal $\vec u$, and we are finished. In the remaining case, that is, there are lines between $\vec u$, $\vec v_{i-1}$ as well as between $\vec u$, $\vec v_i$. Assume that none of the $k$ lines have normal $\vec u$, then the two lines right next to it must have primitive normals $[-a,b]$ and $[-c,d]$ with $a,b,c,d>0$ and $a>b\geq1$, $d>c\geq 1$, then $\det[-c,d;-a,b]=-bc+ad>1$ and we get a contradiction.

It is also easy to check that for the case of three standard lines with normals $[0,1]$, $[-1,0]$ and $[1,-1]$, then $c=-12$.
With this understanding, now given any subdivision problem consisting of a set of $\tt n$ lines with normals $\vec v_i$, $i=1,\cdots\tt n$, one add to the list three standard lines with the above normals. It does not matter if one of the original lines happen to coincide with the standard lines, but for definiteness, let us assume otherwise. Now one can follow the subdivision algorithm to add more lines to the list of ${\tt n}+3$ lines. This subdivision certainly solves the subdivision problem of the original list of $\tt n$ lines, but it also
can be viewed as adding redundant lines to the set of three standard lines, and hence $c=-12$ from the above argument.

\section{The Asymptotic Behaviour and Large $N$}
Using the method of subdividing the moment map cone, we can now give a general formula for the asymptotic behaviour, expressed in terms of the geometrical data from the moment map cone.

In the two products of (\ref{sum_in_cone_I}) and  (\ref{sum_in_cone_II}), we give $x$ a small real part and send its imaginary part to infinity. As usual, the infinite product is taken under the zeta-function regularisation
\bea
 \log I=-\frac{\partial}{\partial s}\frac{1}{\Gc(s)}\int\limits_0^{\infty}\sum_{\vec m\in C_{\mu}(X)\cap\BB{Z}^3}e^{-(\vec m\cdotp\vec \sreeb+x)t}t^{s-1}\,dt\Big|_{s=0}~,\nn
 \eea
and $\log II$ is obtained by replacing $x=\reeb^1-x$.
The summation will now be done as in the earlier sections by dividing $C_{\mu}(X)$. In the $i^{th}$ wedge between line $k$ and $k+1$, one gets (see (\ref{notation_ref}) and Figure \ref{fig_sub_div_II} for the explanation of the notation)
\bea \log I\big|_{[k,k+1)}=-\frac{\partial}{\partial s}\frac{1}{\Gc(s)}\int\limits_0^{\infty}\frac{e^{-(x+\tilde\sreeb_1\times \vec u_k)t}}{(1-e^{-\tilde\sreeb_1\times \vec u_{k+1}t})(1-e^{-\tilde\sreeb_1\times \vec u_kt})(1-e^{-\sreeb^1t})}t^{s-1}\,dt\Big|_{s=0}~,\nn\\
\log II\big|_{(k,k+1]}=-\frac{\partial}{\partial s}\frac{1}{\Gc(s)}\int\limits_0^{\infty}\frac{e^{-(\sreeb^1-x+\tilde\sreeb_1\times \vec u_{k+1})t}}{(1-e^{-\tilde\sreeb_1\times \vec u_{k+1}t})(1-e^{-\tilde\sreeb_1\times \vec u_kt})(1-e^{-\sreeb^1t})}t^{s-1}\,dt\Big|_{s=0}~.\nn\eea
The large $\im x$ behaviour is then given by taking the Laurent series of the denominator at $t=0$ up to $t^0$ and then performing the integral. The details can be found in section 6 of \cite{Qiu:2013pta}, here we just give the result
\bea -\log I\big|_{[k,k+1)}-\log II\big|_{(k,k+1]}=\frac{i\pi}{6}\textrm{sgn}(\im x)B_{3,3}(x|\go_1,-\go_2,\go_3)
~.\nn\eea
where $\go_1=\reeb^1$, $\go_2=\tilde\reeb_1\times \vec u_k$ and $\go_3=\tilde\reeb_1\times \vec u_{k+1}$.

To apply this result to the vector multiplet, one can discard the even powers of $x$, since $x=i\bra a,\gl\ket$, $a\in \FR{t}$ and we shall be summing over all the roots $\gl$, so the even powers of $x$ cancel out. We are left with
\bea
&&-(\log I\big|_{[k,k+1)}+\log II\big|_{(k,k+1]})\big|_{vec}  \nn \\
& &=\frac{i\pi\textrm{sgn}(\im x)}{12\go_1\go_2\go_3}
\Big(2x^3+x(\go_1^2+\go_2^2+\go_3^2-3\go_1\go_2-3\go_2\go_3+3\go_3\go_1)\Big)~.\nn\eea
The assemblage of these contributions from all wedges is entirely similar to the treatment of the Bernoulli polynomials in subsection \ref{sec_BP}, we get
\bea -(\log I+\log II)\big|_{vec}=i\pi\textrm{sgn}(\im x)\Big(\big(\frac{x^3}{3\reeb^1}+\frac{\reeb^1x}{6}\big)\sum_i\frac{4}{|v_i|}A_i
+\frac{x}{12}\big(\frac{1}{2\pi}\sum_i\gb_i+\frac{c}{\reeb^1}\big)\Big)~.\nn\eea
The integer $c=-12$ was introduced in the previous section. One must not forget the contribution from the factors of (\ref{extra_0}), which gives
\bea -\frac{i\pi}{\reeb^1}\textrm{sgn}(\im x)x~.\label{extra_00}\eea
and the total asymptotic behaviour from the vector multiplet is
\bea V^{asymp}_v(x)=-i\pi\textrm{sgn}(\im x)\Big(\big(\frac{x^3}{3\reeb^1}+\frac{\reeb^1x}{6}\big)\sum_i\frac{4}{|v_i|}A_i
+\frac{x}{12}\frac{1}{2\pi}\sum_i\gb_i\Big)~.\label{asymp_v}\eea

For the hypermultiplet $x=\bra \gs,\mu\ket$, but the weights of a general representation may not be symmetric. Also remembering the shift $x\to x+\reeb^1/2$, one gets
\bea &&-(\log I\big|_{[k,k+1)}+\log II\big|_{(k,k+1]})\big|_{hyp}=\frac{i\pi\textrm{sgn}(\im x)}{72\go_1\go_2\go_3}
\Big(12x^3+18x^2(\go_2-\go_3)\nn\\
&&\hspace{4cm}-3x(\go_1^2-2\go_2^2-2\go_3^2+6\go_2\go_3)-3\go_2\go_3(\go_2-\go_3)-\frac32\go_1^2(\go_2-\go_3)\Big)~.\nn\eea
Now as we assemble the contributions from all wedges, the even powers of $x$ drop again
\bea -(\log I+\log II)\big|_{hyp}=i\pi\textrm{sgn}(\im x)\Big(\big(\frac{x^3}{3\reeb^1}-\frac{\reeb^1x}{12}\big)\sum_i\frac{4}{|v_i|}A_i
+\frac{x}{12}\big(\frac{1}{2\pi}\sum_i\gb_i+\frac{c}{\reeb^1}\big)\Big)~,\nn\eea
The factors of (\ref{extra_0}) gives a similar contribution as in (\ref{extra_00}),
and in total the asymptotic behaviour from the hypermultiplet is
\bea
 V^{asymp}_h(x)=i\pi\textrm{sgn}(\im x)\Big(\big(\frac{x^3}{3\reeb^1}-\frac{\reeb^1x}{12}\big)\sum_i\frac{4}{|v_i|}A_i
+\frac{x}{12}\frac{1}{2\pi}\sum_i\gb_i\Big)~.\label{asymp_h}\eea
To summarise, asymptotically, the matrix model integral is given by
\bea
Z^{pert}\sim \int\limits_{\FR{t}}da~e^{-\frac{8\pi^3 r}{\gYM^2}\varrho\,\Tr[a^2]}\cdotp
e^{\Tr_{adj}V^{aymp}_v(ia)}\cdotp e^{\Tr_{\underline{R}}V^{asymp}_h(ia)}~,\label{Z_asymp}\eea
with $V^{aymp}_{v,h}$ given in (\ref{asymp_v}) and  (\ref{asymp_h}). This seems a better way of presenting the asymptotic behaviour of the potential than the way  it was done in
\cite{Qiu:2013pta}, since the role played by the geometry is more transparent now.

Using these asymptotics and following the analysis  from  \cite{Kallen:2012zn} we  get the free energy at the large $N$ limit for the vector multiplet coupled to
a hypermultiplet in adjoint with mass $m$
\bea
   F = - \log Z  = - \frac{\gYM^2 N^3}{96 \pi r} \varrho \Big ( \frac{1}{4} (\reeb^1)^2 + m^2 \Big )^2~.\nn
\eea
for a squashed toric SE manifold. To go to the SE metric, one only needs to set $\reeb^1=3$ \cite{Martelli:2005tp}.
The result is identical to that of the theory on $S^5$ up to a volume factor $\varrho$ as expected.

\section{Summary}\label{s-summary}

In this paper we have derived the full perturbative partition function for the SYM coupled to hypermultiplets on any 5D toric simply connected SE manifold $X$.
 We have calculated the equivariant answer which keeps track of three $U(1)$ isometries on $X$. The actual 5D calculation can be reduced to the counting of
  holomorphic functions on the corresponding CY cone $C(X)$. Thus it is very natural to ask if there is anything deep in this relation to 6D counting besides being a mere technical
   trick. It will be extremely interesting to construct an intrinsically 6D theory which will do the same counting. Another natural question is if the contact instantons
    (localisation locus for 5D theory) has a natural lift to 6D. Somehow it is conceivable that the counting of contact instantons on $X$ also
     reduces to some counting problems on $C(X)$.

 Another important result of this paper is the factorization property of the full perturbative answer on $X$ into copies of perturbative
 Nekrasov partition functions on $\mathbb{C}^2 \times S^1$, with the twisting parameters controlled by the toric data of $X$. It is natural to conjecture
  that the full partition function on $X$ is given by gluing the copies of full Nekrasov partition function with the same set of twisting data as in the perturbative
   sector, however a constructive proof of this conjecture from the first principle is beyond us so far.

 A puzzle that we do not resolve is the following.
 While proving the factorisation we have studied the special function $S^X_3$ depending on $X$ through its toric data. When $X$ is simply connected, the zero instanton localisation locus  consists of just the zero connection, and the answer is given in terms of $S_3^X$. When $X$ is not simply connected, one does not a priori have a Killing spinor.  Moreover we
  would need to take into account all non-trivial flat connections to produce the complete perturbative partition function. From physical considerations, one expects that the contribution of all the flat connections together should factorise, but not individually. However, our proof of the factorisability of $S^X_3$ does not require the simply connectedness. One possible explanation is that the contribution from the zero connection is special and factorises all by itself.
 It would be extremely interesting to investigate the localisation for non-simply connected manifolds and the corresponding factorisation properties.

\bigskip
\bigskip
\bigskip
{\bf Acknowledgements} We thank Chris Herzog and Sara Pasquetti    for discussions.
  We are grateful to Charles Boyer for e-mail exchange regarding SE geometry.
 M.Z. thanks KITP, Santa Barbara for the hospitality where this project has been initiated.
  The research of J.Q. is supported by the Luxembourg FNR grant PDR 2011-2, and by the UL grant GeoAlgPhys 2011-2013. The research of M.Z. is supported in part by Vetenskapsr\r{a}det under grant $\sharp$ 2011-5079 and  in part by the National Science Foundation under Grant No. NSF PHY11-25915.

\appendix
\section{Special Functions}\label{a-special}

\subsection{Definitions of special functions}

The special function $(x|a_1,\cdots,a_n)_{\infty}$ was introduced in \cite{MR2101221}.
 It is defined differently in different domains
\bea &&(x|a_1,\cdots,a_n)_{\infty}=\prod_{i_1,\cdots,i_n\geq0}\big(1-xa_1^{i_1}\cdots a_{k-1}^{i_{k-1}}\,a_k^{-(i_k+1)}\cdots a_n^{-(i_n+1)}\big)^{-(-1)^{n-k}}~,\label{special_Narukawa}\\
&&\hspace{7cm} |a_1|<1,\cdots,|a_{k-1}|<1,~|a_k|>1,\cdots,|a_n|>1~.\nn\eea
This function is symmetric in the $n$ arguments $a_i$, but it is not defined if any $|a_i|=1$. These functions enjoy the property
\bea
 (x|a_1, \cdots , a_r)_{\infty} = \frac{1}{ (a_{j}^{-1}x| a_1, \cdots, a_j^{-1}, \cdots , a_r)_{\infty} }~.\label{app-relations-1}\eea

Often we will use the short hand
\bea
(e^{2\pi iz}|e^{2\pi i\go_1},\cdots,e^{2\pi i\go_n})_{\infty}=(z|\go_1,\cdots,\go_n)~.\label{short_hand}\eea
One needs to remember that when using the latter notation, the function is periodic under shift by an integer of any of the arguments.

\begin{lemma}\label{lem_special_fun}
  \bea
  \frac{(x|a,b)_{\infty}}{(x|a,ab)_{\infty}}=(x|b^{-1},ab)_{\infty}^{-1}~.\label{modular_new}\eea
\end{lemma}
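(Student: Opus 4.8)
The plan is to reduce (\ref{modular_new}) to an elementary rearrangement of absolutely convergent products. First I would work in the parameter region $|a|<1$, $|b|<1$ (so that $|ab|<1$ as well), where every special function occurring below lies in the ``all arguments small'' branch of the definition (\ref{special_Narukawa}); once the identity is proved there, the other sign-regimes of the arguments follow by the same argument after using the symmetry of (\ref{special_Narukawa}) in its arguments together with the reflection property (\ref{app-relations-1}). In this region (\ref{special_Narukawa}) gives the convergent products $(x|a,b)_\infty=\prod_{i,j\geq0}(1-xa^ib^j)$ and $(x|a,ab)_\infty=\prod_{i,j\geq0}(1-xa^i(ab)^j)=\prod_{i,j\geq0}(1-xa^{i+j}b^j)$, while by (\ref{app-relations-1}) the right-hand side can be rewritten as $(x|b^{-1},ab)_\infty^{-1}=(bx\,|\,b,ab)_\infty=\prod_{i,j\geq0}\bigl(1-xb^{i+1}(ab)^j\bigr)$; this last rewriting has the advantage of removing any need to discuss the inverted-argument branch of (\ref{special_Narukawa}).

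Next I would compute the left-hand ratio by matching factors. The exponents $(i+j,\,j)$ of $(a,b)$ occurring in the denominator product run over exactly the pairs $(p,q)$ with $p\geq q\geq0$, each once, so dividing cancels all factors $(1-xa^pb^q)$ of the numerator with $p\geq q$ and leaves $\prod_{0\leq p<q}(1-xa^pb^q)$. I would then reindex the surviving pairs by $p=i$, $q=i+1+l$ with $i,l\geq0$, which is a bijection onto $\BB{Z}_{\geq0}^2$, and use $a^ib^{i+1+l}=(ab)^ib^{l+1}$ to obtain $\prod_{i,l\geq0}\bigl(1-x(ab)^ib^{l+1}\bigr)$. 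After an obvious relabelling of indices this is exactly the product found for the right-hand side above, which proves the lemma in the chosen region and hence, in the multiplicative shorthand (\ref{short_hand}), the stated periodic identity; this is the form actually used, e.g.\ around (\ref{corner}).

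There is essentially no obstacle here beyond bookkeeping. The one point requiring care is to stay in a region where all the products converge absolutely, so that splitting the numerator product and reindexing it are legitimate, and to keep the branch of (\ref{special_Narukawa}) straight when one passes to the remaining sign-regimes of the arguments — which is precisely why it is convenient to rewrite the right-hand side via (\ref{app-relations-1}) before comparing. The content of the proof is the single reindexing $(p,q)\mapsto(i,l)$ of the double product.
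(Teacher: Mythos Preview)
Your proposal is correct and follows essentially the same route as the paper: in the region $|a|<1$, $|b|<1$ both you and the paper rewrite the right-hand side as $(bx|b,ab)_\infty$ via (\ref{app-relations-1}) and then prove the identity by the identical reindexing of the double product (the paper's bijection $\{(i,j):j>i\}\to\{(i,j):i,j\geq0\}$ via $a^ib^j=b(ab)^ib^{j-i-1}$ is your $(p,q)\mapsto(i,l)$). The only difference is one of bookkeeping: the paper explicitly redoes the reindexing in two further sign-regimes ($|a|<1$, $|b|>1$ with $|ab|\lessgtr1$) before invoking ``switching the role of $a,b,ab$'' for the rest, whereas you dispatch all non-basic regimes at once by the same appeal to symmetry plus (\ref{app-relations-1}).
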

\begin{proof}
  We prove the lemma case by case, first let $|a|<1$ and $|b|<1$, then
  \bea
   \frac{(x|a,b)_{\infty}}{(x|a,ab)_{\infty}}&=&\frac{\prod\limits_{i,j\geq0}(1-xa^ib^j)}{\prod\limits_{i,j\geq0}(1-xa^i(ab)^j)}
 =\prod\limits_{i\geq 0,\; j>i}(1-xa^ib^j)\nn \\
 &=&\prod\limits_{i,j\geq 0}(1-xb(ab)^ib^j)=(xb|ab,b)_{\infty}=(x|ab,b^{-1})_{\infty}^{-1}~.\nn\eea
  If instead $|a|<1$, $|b|>1$ but $|ab|<1$, then
    \bea
     \frac{(x|a,b)_{\infty}}{(x|a,ab)_{\infty}}&=&\frac{1}{\prod\limits_{i,j\geq0}(1-xa^ib^{-j-1})\prod\limits_{i,j\geq0}(1-xa^i(ab)^j)}\nn\\
  &=&\frac{1}{\prod\limits_{i\geq 0,\; j\leq i}(1-xa^ib^j)}=\frac{1}{\prod\limits_{i,j\geq 0}(1-xb^{-i}(ab)^j)}=(xb|b,ab)_{\infty}=(x|b^{-1},ab)_{\infty}^{-1}.\nn\eea
  But if $|ab|>1$
    \bea
  \frac{(x|a,b)_{\infty}}{(x|a,ab)_{\infty}}&=&\frac{\prod\limits_{i,j\geq0}(1-xa^i(ab)^{-j-1})}{\prod\limits_{i,j\geq0}(1-xa^ib^{-j-1})} =\prod_{j\geq0,-j-1\leq i<0}(1-xa^ib^{-j-1}) \nn \\
  &=&\prod_{k,l\geq0}(1-xa^{-k-1}b^{k+l+1})=(xb|ab,b)_{\infty}=(x|ab,b^{-1})_{\infty}^{-1}.\nn\eea
  By switching the role of $a,b,ab$ one can obtain the other cases\qed
\end{proof}

We will also make use of the multiple Gamma function, defined as a $\zeta$-regulated product
\bea
\Gamma_r=\prod_{n_1,\cdots,n_r=0}^{\infty}\big(n_1\go_1+\cdots+ n_r\go_r+x)^{-1}~,\label{mult_gamma}\eea
the domain of definition is that all $\go_i\in\BB{C}$ should lie on the same side of some straight
line through the origin and $x\in\BB{C}$.

The multiple sine function is defined as
\bea
 S_r(x|\go_1,\cdots,\go_r)=\Gc_r(x|\go_1,\cdots,\go_r)^{-1}\Gc_r(\sum_{i=1}^r\go_i-x|\go_1,\cdots,\go_r)^{(-1)^r}~.\label{mult_sine}\eea
The multiple sine function has an important factorisation property, see property 5 in \cite{MR2101221}, we shall only give the the case $r=3$
\bea
&& S_3(x|\go_1,\cdots,\go_r)=e^{-\frac{\pi i}{6}B_{3,3}(x|\go_1,\cdots,\go_3)}\nn\\
&&\hspace{0.5 cm}\big(e^{2\pi i\frac{x}{\go_1}}\big|e^{2\pi i\frac{\go_2}{\go_1}},e^{2\pi i\frac{\go_3}{\go_1}}\big)_{\infty}
\big(e^{2\pi i\frac{x}{\go_2}}\big|e^{2\pi i\frac{\go_1}{\go_2}},e^{2\pi i\frac{\go_3}{\go_2}}\big)_{\infty}
\big(e^{2\pi i\frac{x}{\go_3}}\big|e^{2\pi i\frac{\go_1}{\go_3}},e^{2\pi i\frac{\go_2}{\go_3}}\big)_{\infty}~,\label{factorisation_sine}\eea
or one may have the factorisation
\bea
 && S_3(x|\go_1,\cdots,\go_r)=e^{\frac{\pi i}{6}B_{3,3}(x|\go_1,\cdots,\go_3)}\nn\\
&&\hspace{1.5cm}\big(e^{-2\pi i\frac{x}{\go_1}}\big|e^{-2\pi i\frac{\go_2}{\go_1}},e^{-2\pi i\frac{\go_3}{\go_1}}\big)_{\infty}
\big(e^{-2\pi i\frac{x}{\go_2}}\big|e^{-2\pi i\frac{\go_1}{\go_2}},e^{-2\pi i\frac{\go_3}{\go_2}}\big)_{\infty}
\big(e^{-2\pi i\frac{x}{\go_3}}\big|e^{-2\pi i\frac{\go_1}{\go_3}},e^{-2\pi i\frac{\go_2}{\go_3}}\big)_{\infty}~.\nn
\eea
where $B_{3,3}$ is the Bernoulli polynomial defined as
\bea
B_{3,3}(z|\go_1,\go_2,\go_3)&=&\frac{z^3}{\go_1\go_2\go_3}-\frac32\frac{\go_1+\go_2+\go_3}{\go_1\go_2\go_3}z^2
+\frac{\go_1^2+\go_2^2+\go_3^2+3\go_1\go_2+3\go_2\go_3+3\go_3\go_1}{2\go_1\go_2\go_3}z\nn\\
&&-\frac{(\go_1+\go_2+\go_3)(\go_1\go_2+\go_2\go_3+\go_3\go_1)}{4\go_1\go_2\go_3}~.\label{bernoulli}\eea
These polynomials satisfy
\bea
B_{3,3}(z+\go_2|\go_1,\go_2,\go_3)=B_{3,3}(z|\go_1,-\go_2,\go_3)~.\label{bernoulli_prop}\eea

By comparing the two equivalent factorisations, one gets
\bea
e^{\frac{\pi i}{3}B_{3,3}(x|\go_1,\cdots,\go_3)}=\frac{\big(e^{2\pi i\frac{x}{\go_1}}\big|e^{2\pi i\frac{\go_2}{\go_1}},e^{2\pi i\frac{\go_3}{\go_1}}\big)_{\infty}}{\big(e^{-2\pi i\frac{x}{\go_1}}\big|e^{-2\pi i\frac{\go_2}{\go_1}},e^{-2\pi i\frac{\go_3}{\go_1}}\big)_{\infty}}\cdotp(\textrm{cyc perm in }\go_{1,2,3})~.\nn\eea
So one may also write the factorisation as
\bea
 &&S_3(x|\go_1,\cdots,\go_r) \label{factorisation_sine_sara} \\
 &&=\Big(\big(e^{2\pi i\frac{x}{\go_1}}\big|e^{2\pi i\frac{\go_2}{\go_1}},e^{2\pi i\frac{\go_3}{\go_1}}\big)_{\infty}\cdotp\big(e^{-2\pi i\frac{x}{\go_1}}\big|e^{-2\pi i\frac{\go_2}{\go_1}},e^{-2\pi i\frac{\go_3}{\go_1}}\big)_{\infty}\Big)^{1/2}(\textrm{cyc perm in }\go_{1,2,3})\nn\eea
without the Bernoulli polynomial but at the cost of having a square root. One can use (\ref{app-relations-1}) to rewrite the above as
\bea
S_3(x|\go_1,\cdots,\go_r)=\Big(\big(e^{2\pi i\frac{x}{\go_1}}\big|e^{2\pi i\frac{\go_2}{\go_1}},e^{2\pi i\frac{\go_3}{\go_1}}\big)_{\infty}\cdotp\big(x\to -x+\go_2+\go_3\big)\Big)^{1/2}\cdotp(\textrm{cyc perm in }\go_{1,2,3})~.\nn\eea

\subsection{A Lemma Concerning the Special Function}\label{sec_Lemma}

In this section we prove a useful identity, which may be of independent interest. To recapitulate the problem, one divides a 2-plane into a number of wedges with separating lines $\ell_i$ of rational slope. Assume that the normals (counter clockwise pointing) of every two neighbouring lines form an $SL(2,\BB{Z})$ basis, i.e. $\det[\vec u_i,\vec u_{i+1}]=\vec u_i\times \vec u_{i+1}=1$ for all $i$. Let $\vec r$ be a generic 2-vector in the sense that its imaginary part has irrational slope. We will prove
\bea \prod_k\big(x\big|\vec r\times \vec u_k,-\vec r\times \vec u_{k+1}\big)=1-e^{2\pi ix}.\label{abstruse}\eea
First, we remind the reader that we are using the short hand (\ref{short_hand}). Moreover, one has $\im(\vec r\times \vec u_i)\neq0$ for all $i$, so the special function above is well defined. The following is a direct proof, but it is also possible to prove this identity using (\ref{modular_new}) plus an induction similar to the one used when proving $c=-12$ at the end of section \ref{sec_act_cal}, which we leave to the reader.
\begin{proof}
Consider the line on $\BB{R}^2$ perpendicular to $\im\vec r$, since $\im\vec r$ is chosen generic, this line does not land on any integral points. We will only be interested in the four $\vec u$'s next to this line, see Figure \ref{fig_big_cancel}.
\begin{figure}[h]
\begin{center}
\begin{tikzpicture}[scale=1]
\draw [->] (0,-2) -- (0,2) node[left] {\small$\im\vec r$};
\draw [-,dashed] (-2,0) -- (2,0);

\draw [->,blue] (0,0) -- (1.2,1.2) -- (0.8,1.6) node [above] {\small$\vec u_{i+1}$};
\node at (.6,.6) [below] {\small$\ell_{i+1}$};

\draw [->,blue] (0,0) -- (1.2,-1.2) node [right] {\small$u_i$} -- (1.6,-0.8);
\node at (.6,-.6) [below] {\small$\ell_i$};


\draw [->,blue] (0,0) -- (-1.2,0.4) -- (-1.4,-0.2) node [left] {\small$\vec u_j$};
\draw [->,blue] (0,0) -- (-1.2,-0.4) -- (-1.0,-1.0) node [left] {\small$\vec u_{j+1}$};
\end{tikzpicture}\caption{The black line is $\im\vec r$, for the lines $\ell_j$ above the dotted line, its normal satisfies $\im\vec r\times \vec u_j>0$}\label{fig_big_cancel}
\end{center}
\end{figure}
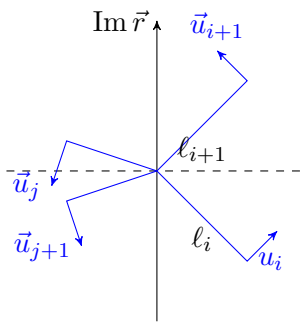

Let $\vec w,\vec v$ be the normals of two lines (ordered counterclockwise), and assume first $\im \vec r\cdotp\vec w>0,~\im \vec r\cdotp\vec v>0$, consider the infinite product
\bea P^{++}&=&\prod_{\vec n\cdotp\vec w>0;\vec n\cdotp \vec v\leq0}\big(1-e^{2\pi ix}\exp2\pi i(\vec n\cdotp\vec r)\big)\nn\\
&=&\prod_{\vec n\cdotp\vec w>0;\vec n\cdotp \vec v\leq0}\big(1-e^{2\pi ix}\exp2\pi i((\vec n\cdotp \vec w)(\vec r\times\vec v)-(\vec n\cdotp\vec v)(\vec r\times\vec w))\big)\nn\\
&=&\prod_{i,j\geq0}\big(1-e^{2\pi ix}\exp2\pi i((j+1)(\vec r\times\vec v)+i(\vec r\times \vec w)\big)=(x|\vec r\times\vec w,-\vec r\times\vec v)^{-1},\nn\eea
Similarly for $\im\vec r\times \vec w<0,~\im\vec r\times\vec v<0$,
\bea P^{--}&=&\prod_{\vec n\cdotp\vec w\geq0;\vec n\cdotp \vec v<0}\big(1-e^{2\pi ix}\exp2\pi i(-\vec n\cdotp\vec r)\big)=(x|\vec r\times u,-\vec r\times v)^{-1},\nn\eea
and for $\im\vec r\times\vec w>0,~\im\vec r\times \vec v<0$,
\bea P^{+-}&=&\prod_{n\cdotp \vec w\leq0;n\cdotp\vec v\leq0}\big(1-e^{2\pi ix}\exp2\pi i(\vec n\cdotp\vec r)\big)=(x|\vec r\times u,-\vec r\times v),\nn\eea
and finally if $\im\vec r\times \vec w<0,~\im\vec r\times \vec v>0$
\bea P^{-+}&=&\prod_{n\cdotp \vec w>0;n\cdotp \vec v>0}\big(1-e^{2\pi ix}\exp2\pi i(\vec n\cdotp\vec r)\big)=(x|\vec r\times \vec w,-\vec r\times\vec v).\nn\eea
With these preparations, we can finish the proof. The product from $(x|\vec r\times \vec u_{i+1},-\vec r\times \vec u_{i+2})$ to $(x|\vec r\times \vec u_{j-1},-\vec r\times\vec u_j)$ can be combined into a single product
\bea P_{(\vec u_{i+1},\vec u_j]}=\prod_{n\cdotp \vec u_{i+1}>0;n\cdotp \vec u_j\leq0}\big(1-e^{2\pi ix}\exp2\pi i(\vec n\cdotp\vec r)\big)^{-1}.\nn\eea
Similarly the factors from $(x|\vec r\times \vec u_{j+1},-\vec r\times \vec u_{j+2})$ to $(x|\vec r\times \vec u_{i-1},-\vec r\times \vec u_i)$
\bea P_{[\vec u_{j+1},\vec u_i)}=\prod_{n\cdotp \vec u_{j+1}\geq0;n\cdotp \vec u_i<0}\big(1-e^{2\pi ix}\exp2\pi i(-\vec n\cdotp\vec r)\big)^{-1}.\nn\eea
The factor $(x|e^{2\pi i(\vec r\times v_j)},e^{-2\pi i(\vec r\times v_{j+1})})$ can be written as
\bea P_{[-\vec u_{j+1},\vec u_j]}=\prod_{n\cdotp \vec u_j\leq0;n\cdotp \vec u_{j+1}\leq0}\big(1-e^{2\pi ix}\exp2\pi i(\vec n\cdotp\vec r)\big)=\prod_{n\cdotp \vec u_j\leq0;n\cdotp (-\vec u_{j+1})\geq0}\big(1-e^{2\pi ix}\exp2\pi i(\vec n\cdotp\vec r)\big),\nn\eea
the situation is depicted as in Figure \ref{fig_big_cancel_I}, that is, one flips $\vec u_{j+1}$ so that both $\vec u_j$ and $-\vec u_{j+1}$ stays above the dotted line.
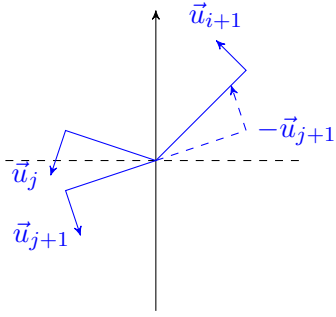
\begin{figure}[h]
\begin{center}
\begin{tikzpicture}[scale=1]
\draw [-,dashed] (-2,0) -- (2,0);
\draw [->] (0,-2) -- (0,2);

\draw [->,blue] (0,0) -- (1.2,1.2) -- (0.8,1.6) node [above] {\small$\vec u_{i+1}$};
\draw [->,blue,dashed] (0,0) -- (1.2,0.4) node [right] {\small$-\vec u_{j+1}$} -- (1.0,1.0);
\draw [->,blue] (0,0) -- (-1.2,0.4) -- (-1.4,-0.2) node [left] {\small$\vec u_j$};
\draw [->,blue] (0,0) -- (-1.2,-0.4) -- (-1.0,-1.0)  node [left] {\small$\vec u_{j+1}$};
\end{tikzpicture}\caption{One flips $\vec u_{j+1}$, and the product is now between $-\vec u_{j+1}$ and $\vec u_j$.}\label{fig_big_cancel_I}
\end{center}
\end{figure}
Then the combination
\bea P_{(\vec u_{i+1},\vec u_j]}P_{[-\vec u_{j+1},\vec u_j]}=\prod_{n\cdotp (-\vec u_{j+1})\geq0;n\cdotp \vec u_{i+1}\leq0}\big(1-e^{2\pi ix}\exp2\pi i(\vec n\cdotp\vec r)\big)=P_{[-\vec u_{j+1},\vec u_{i+1}]}.\label{detail_I}\eea
For the remaining factor $(x|e^{2\pi i(\vec r\times \vec u_i)},e^{-2\pi i(\vec r\times \vec u_{i+1})})$, consider the Figure \ref{fig_big_cancel_II}
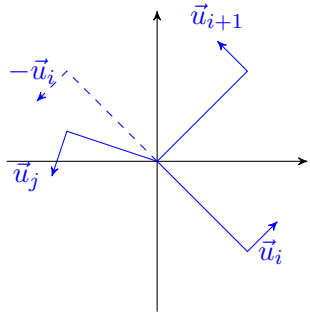
\begin{figure}[h]
\begin{center}
\begin{tikzpicture}[scale=1]
\draw [->] (-2,0) -- (2,0);
\draw [->] (0,-2) -- (0,2);

\draw [->,blue] (0,0) -- (1.2,1.2) -- (0.8,1.6) node [above] {\small$\vec u_{i+1}$};
\draw [->,blue] (0,0) -- (1.2,-1.2) node [right] {\small$\vec u_i$} -- (1.6,-0.8);
\draw [->,blue,dashed] (0,0) -- (-1.2,1.2) node [left] {\small$-\vec u_i$} -- (-1.6,0.8);


\draw [->,blue] (0,0) -- (-1.2,0.4) -- (-1.4,-0.2) node [left] {\small$\vec u_j$};
\end{tikzpicture}\caption{One flips $\vec u_i$, and the product is now between $\vec u_{i+1}$ and $-\vec u_i$.}\label{fig_big_cancel_II}
\end{center}
\end{figure}
and we get the contribution
\bea P_{(\vec u_{i+1},-\vec u_i)}=\prod_{\vec n\cdotp (-\vec u_i)<0;\vec n\cdotp \vec u_{i+1}>0}\big(1-x\exp2\pi i(\vec n\cdotp\vec r)\big)
=\prod_{\vec n\cdotp \vec u_i<0;\vec n\cdotp(-\vec u_{i+1})>0}\big(1-e^{2\pi ix}\exp2\pi i(-\vec n\cdotp\vec r)\big)~.\nn\eea
So the combination
\bea P_{[\vec u_{j+1},\vec u_i)}P_{(\vec u_{i+1},-\vec u_i)}&=&\prod_{\vec n\cdotp \vec u_{j+1}\geq0;\vec n\cdotp (-\vec u_{i+1})\leq0}\big(1-e^{2\pi ix}\exp2\pi i(-\vec n\cdotp\vec r)\big)^{-1}\cdotp(1-e^{2\pi ix})\nn\\
&=&\prod_{\vec n\cdotp(-\vec u_{j+1})\geq0;\vec n\cdotp \vec u_{i+1}\leq0}\big(1-e^{2\pi ix}\exp2\pi i(\vec n\cdotp\vec r)\big)^{-1}\cdotp(1-e^{2\pi ix})~.\nn\eea
Note that when one combines the two sums in two wedges, extra care is needed for the origin, this is the reason one has an extra $(1-e^{2\pi ix})$ factor above.
What we get here cancels the $P_{[-\vec u_{j+1},\vec u_{i+1}]}$ term from (\ref{detail_I}), leaving us with the factor $(1-e^{2\pi ix})$.
We have proved the cancellation assuming the particular arrangement of the four lines $\vec u_{i,i+1}$ $\vec u_{j,j+1}$ as in Figure
\ref{fig_big_cancel}, if they are arranged in a different relative position, the proof still goes through with only minor modifications \qed

\end{proof}

\section{A More Convenient formulation of the Good Cone Condition}\label{sec_childsplay}

The original goodness condition of a cone given by Lerman is the following, at every codimension-$k$ face, the $k$-normals $\vec v_{i_1},\cdots\vec v_{i_k}$ satisfies
\bea \textrm{span}_{\BB{R}}\bra\vec v_{i_1},\cdots\vec v_{i_k}\ket\cap\BB{Z}^{\tt m}=\textrm{span}_{\BB{Z}}\bra\vec v_{i_1},\cdots\vec v_{i_k}\ket~.\label{Lerman}\eea
This condition is equivalent to saying that $\{\vec v_{i_1},\cdots\vec v_{i_k}\}$ can be completed into an $SL({\tt m},\BB{Z})$-matrix. To see this, it is enough to consider ${\tt m}=3$.

At a codimension 1 face, we just have one normal, call it $\vec v$. For (\ref{Lerman}) to be true $\vec v$ must be primitive. This is also sufficient, indeed, suppose $\vec v=[p,q,r]$, $\gcd(p,q,r)=1$, there exist two integers $s,t$ such that $sq-tp=\gcd(p,q)$ ($s,\,t$ can be found using Euclid's algorithm). Consider the $SL(3,\BB{Z})$ matrix
\bea A=\left(
         \begin{array}{ccc}
           \bar q & -\bar p & 0 \\
           -t & s & 0 \\
           0 & 0 & 1 \\
         \end{array}\right),~~~\bar p=p/\gcd(p,q),~~\bar q=q/\gcd(p,q).\nn\eea
Clearly $A\vec v=[0,\gcd(p,q),r]$. Now since $\gcd(\gcd(p,q),r)=1$, one can find another $SL(3,\BB{Z})$ matrix $A'$ such that $A'A\vec v=[0,0,1]$.
Hence $A'A\vec v$ satisfies (\ref{Lerman}), and so $\vec v$ also does. From this argument, we also see that $\vec v$ can be completed into an $SL(3,\BB{Z})$ matrix. The above argument is quite a useful one, we restate it as, for any vector $\vec v$ of dimension $\tt m$, one can always find an $SL({\tt m},\BB{Z})$ matrix $A$ so that $A\vec v=[\gcd(\vec v),0,\cdots,0]$.

Now proceed to the codimension 2 face, which is the intersection of two codimension 1 faces with primitive normals $\vec u$, $\vec v$. One can find an $SL(3,\BB{Z})$ matrix to put $\vec v$ into $[0,0,1]$, denote by $w=A\vec u$. The span of $A\vec u,\,A\vec v$ is the same as the span of $[0,0,1]$ and $[w^1,w^2,0]$, showing that $\gcd(w^1,w^2)=1$ if (\ref{Lerman}) is to be satisfied. Then another $SL(3,\mathbb{Z})$ transformation can put $[\vec u,\vec v]$ into
\bea [\vec u,\vec v]\to\left(
       \begin{array}{cc}
         0 & 0 \\
         1 & 0 \\
         * & 1 \\
       \end{array}\right),\nn\eea
which can obviously be completed into an $SL(3,\BB{Z})$ matrix.
With minor modifications, the proof extends to higher dimensions as well.

\providecommand{\href}[2]{#2}\begingroup\raggedright\endgroup

\end{document}